\documentclass[journal]{IEEEtran}
\usepackage{amsmath,amssymb,amsthm,bm}
\usepackage{graphicx}
\usepackage[caption=false,font=footnotesize]{subfig}
\usepackage{cite}
\usepackage{booktabs}
\usepackage[hidelinks]{hyperref}
\newtheorem{theorem}{Theorem}
\newtheorem{lemma}{Lemma}
\newtheorem{corollary}{Corollary}
\newtheorem{remark}{Remark}

\title{Performance Analysis of RSMA-Enabled Bistatic ISAC in LEO Networks with Holographic Apertures and Fluid-Antenna Users}
\author{Wali Ullah Khan, Chandan Kumar Sheemar, Muhammad Adil, Symeon Chatzinotas\thanks{Wali Ullah Khan, Chandan Kumar Sheemar and Symeon Chatzinotas are with the Interdisciplinary Centre for Security, Reliability, and Trust (SnT), University of Luxembourg, Luxembourg (e-mails: waliullahkhan30@gmail.com, chandankumar.sheemar@uni.lu, symeon.chatzinotas@uni.lu).

Muhammad Adil is with the Department of Electronics Engineering, University of Rome Tor Vergata, 00133 Rome, Italy (e-mail: muhammad.adil@uniroma2.it).

}}

\begin{document}
\maketitle

\begin{abstract}
This paper develops an ergodic performance framework for rate-splitting multiple access (RSMA)-enabled bistatic integrated sensing and communication (ISAC) in a low-Earth-orbit (LEO) satellite network with an amplitude-constrained reconfigurable holographic surface (RHS) and fluid-antenna-system (FAS) users. Deterministic angle-based common and zero-forcing private reference beams are realized through one shared multi-feed RHS amplitude state and stream-specific feed-domain precoders, and the resulting self-, leakage-, and target-direction gains are retained explicitly. Conservative private- and common-rate lower bounds are derived for both reference-port and best-of-$P$ FAS reception while preserving the same-port selection coupling. For sensing, a closed-form average bistatic sensing signal-to-noise ratio (SNR) is obtained under nearest-receiver association and a finite target--receiver guard distance, with extensions to angle-conditioned footprint averaging and angular scheduling. Monte Carlo results confirm the tightness of the analytical rate bounds and validate the sensing expressions. Benchmarks show that scalar RHS-efficiency models can miss strong direction-dependent effects and that nearest-ground-receiver bistatic sensing provides a $17.7$--$25.7$~dB mean SNR advantage over a favorable monostatic LEO reference for $N_{\rm RHS}=16384$ over LEO altitudes of $400$--$1000$~km. FAS gains are largest in scattering-rich regimes, while the realized shared-state RHS target gain need not vary monotonically with aperture size.
\end{abstract}

\begin{IEEEkeywords}
Integrated sensing and communication (ISAC), low-Earth-orbit (LEO) satellite, rate-splitting multiple access (RSMA), reconfigurable holographic surface (RHS), fluid antenna system (FAS), bistatic sensing, performance analysis.
\end{IEEEkeywords}

 \section{Introduction}

Integrated sensing and communication (ISAC) is emerging as a key capability for non-terrestrial networks (NTNs), where satellite waveforms can simultaneously provide wide-area connectivity and illuminate geographically distributed targets \cite{10989572,11143190,11488929}. Low-Earth-orbit (LEO) satellites are particularly attractive for ISAC because their lower orbital altitude reduces propagation delay and satellite-to-ground path loss compared with higher-orbit platforms \cite{10981514}. However, the long propagation distances still require large transmit apertures to provide sufficient link and sensing gains. Conventional fully active phased arrays can therefore incur considerable radio-frequency (RF), phase-shifter, hardware-complexity, and power-consumption overhead as the aperture size increases \cite{11157894,11328800,11357516}. Reconfigurable holographic surfaces (RHSs) offer a promising alternative by realizing electrically large apertures through a guided reference wave that excites a dense set of low-power meta-elements \cite{11452230}. Unlike ideal complex-weight arrays, however, practical RHSs operate over a constrained excitation manifold. Consequently, realizing multiple desired beams through one shared amplitude-constrained holographic transfer matrix modifies not only their intended directional gains but also the residual leakage toward other users and sensing directions \cite{11037747}.

The resulting directional leakage becomes particularly relevant in multiuser satellite links, where users share limited spectral and spatial resources. Rate-splitting multiple access (RSMA) has emerged as an effective interference-management strategy by dividing user messages into common and private parts, allowing the common stream to partially decode interference while the private streams preserve user-specific information \cite{10097680,10684731,10312769,10896843}. RSMA is also naturally suited to satellite ISAC because the common stream can simultaneously act as a shared sensing-illumination waveform, thereby coupling communication and sensing through its power allocation and spatial beam direction \cite{10989572,11143190}. Although recent LEO-ISAC studies have investigated hybrid precoding and bistatic RSMA architectures \cite{10989572,11143190}, they typically rely on conventional array models and therefore do not characterize the direction-dependent self-gain and inter-user leakage introduced by amplitude-constrained holographic beam synthesis.

At the user side, fluid antenna systems (FASs) provide an additional spatial degree of freedom by selecting a favorable receive position among multiple correlated candidate ports distributed over a compact aperture \cite{10753482,11302793,10599127,11175437}. Their potential has recently been explored in satellite and NTN communications \cite{11622450,11482755,11194137,11184548,11644016}. In RSMA reception, however, the common and private streams must be decoded at the same selected physical port. Hence, selecting the port according to the desired private-stream gain simultaneously alters the interference observed at that port and the common-stream channel available before successive interference cancellation (SIC). This statistical coupling becomes especially relevant with an RHS transmitter because projection through a shared amplitude-constrained holographic transfer matrix generally destroys the orthogonality of the ideal zero-forcing (ZF) private beams.

The sensing architecture introduces a further challenge. In monostatic LEO sensing, the target echo experiences two satellite-scale propagation legs, resulting in a severe two-way path-loss penalty. Bistatic sensing alleviates this limitation by using the LEO satellite as the illuminator while a spatially separated terrestrial receiver collects the target echo \cite{10788035,11143190,11488929}. With sensing-only ground base stations (BSs) distributed over the service region, a target can be associated with its nearest receiver, replacing the long satellite return path with a much shorter terrestrial sensing hop. These intertwined effects motivate a unified performance analysis that jointly captures amplitude-constrained RHS radiation, RSMA interference management, FAS port selection and its induced statistical coupling, bistatic LEO sensing geometry, and the spatial distribution of terrestrial sensing receivers.

\subsection{Related Work and Research Gap}
The closest literature falls into four complementary directions. First, RSMA-enabled satellite and LEO-ISAC systems have demonstrated the value of common/private message splitting and shared sensing illumination \cite{11038752,10989572,11143190}. Second, RHS and holographic-metasurface architectures have been studied for satellite links to reduce large-aperture RF complexity and to enable communication-oriented hybrid or holographic beamforming \cite{10925876,10844052,11141770,11271825}. Third, bistatic LEO-ISAC architectures separate the satellite illuminator from a terrestrial echo receiver and optimize communication--sensing beamforming under geometric or channel constraints \cite{11143190,11488929,11284853}. Fourth, FAS-assisted satellite communications exploit correlated spatial port selection to improve outage, ergodic rate, or receiver compactness \cite{11482755,11194137,11184548,11644016}.

These lines of work do not yet provide a unified analytical characterization of an RSMA-enabled bistatic LEO-ISAC link in which (i) all RSMA streams are realized through one shared amplitude-constrained multi-feed holographic state, (ii) the realized post-projection self and leakage gains are retained rather than replaced by an ideal array-gain factor, (iii) both RSMA decoding stages are evaluated at the same FAS-selected port while accounting for selection-induced interference coupling, and (iv) the bistatic echo is received by the nearest node of a spatially distributed sensing layer. The analytical difficulty is not only the coexistence of these components: the shared-RHS projection generally destroys the ideal ZF orthogonality, FAS selection couples numerator and denominator statistics at the selected port, and the scheduled user angles are statistically dependent on the path-loss geometry. These effects must therefore be handled explicitly rather than by independent substitutions of conventional results.

Motivated by this gap, we develop an ergodic performance-analysis framework for an RHS-equipped LEO satellite serving FAS users through RSMA while using the common stream for bistatic target illumination. The main contributions are:
\begin{itemize}
\item An amplitude-constrained multi-feed RHS model with deterministic angle- and ephemeris-based RSMA precoding is proposed. A single real-valued holographic amplitude state is shared by the common and all private streams, while stream separation is performed in the feed domain. The resulting user self-gains, cross-user leakage, target gain, and realized radiated powers are retained explicitly.
\item We derive conservative ergodic private- and common-rate bounds from Rician log moments without assuming desired/interference independence. For FAS reception, best-of-$P$ self-gain statistics, selected-port interference moments, and selected-port common-stream log moments preserve the requirement that both SIC stages use the same selected port.
\item We derive a closed-form average bistatic sensing SNR under nearest-ground-receiver association and a finite guard distance, together with angle-conditioned footprint averaging, the induced angular candidate process, and a scheduling-based ZF-gain guarantee. Monte Carlo validation and benchmarks against ideal/scalar RHS models, random/exhaustive scheduling, and a favorable monostatic reference quantify the resulting tradeoffs.
\end{itemize}

The remainder of the paper is organized as follows. Section~\ref{sec:system_model} presents the network, RHS, channel, RSMA, scheduling, and bistatic sensing models. Section~\ref{sec:analysis} develops the ergodic communication and sensing analysis and its spatial/angular extensions. Section~\ref{sec:numerical_results} validates the analysis and quantifies the principal communication--sensing trends. Section~\ref{sec:conclusion} concludes the paper.

\textit{Notation:}
Boldface lower- and upper-case letters denote vectors and matrices, respectively. $(\cdot)^T$, $(\cdot)^H$, and $(\cdot)^*$ denote transpose, Hermitian transpose, and complex conjugation; $\|\cdot\|_2$ and $|\cdot|$ denote the Euclidean norm and scalar magnitude. $\Re\{\cdot\}$, $\mathbb E[\cdot]$, and $\Pr[\cdot]$ denote real part, expectation, and probability. $\mathcal{CN}(\boldsymbol\mu,\mathbf C)$ and $\mathcal N(\mu,\sigma^2)$ denote circularly symmetric complex Gaussian and real Gaussian distributions, respectively. $\mathbf I_N$ and $\mathbf 1_N$ denote the $N\times N$ identity matrix and the $N$-dimensional all-ones vector, $\otimes$ denotes the Kronecker product, $j\triangleq\sqrt{-1}$, $\lambda$ is the carrier wavelength, and $k_0\triangleq2\pi/\lambda$ is the free-space wavenumber.

\begin{figure}[t]
\centering   
\includegraphics[width=\columnwidth]{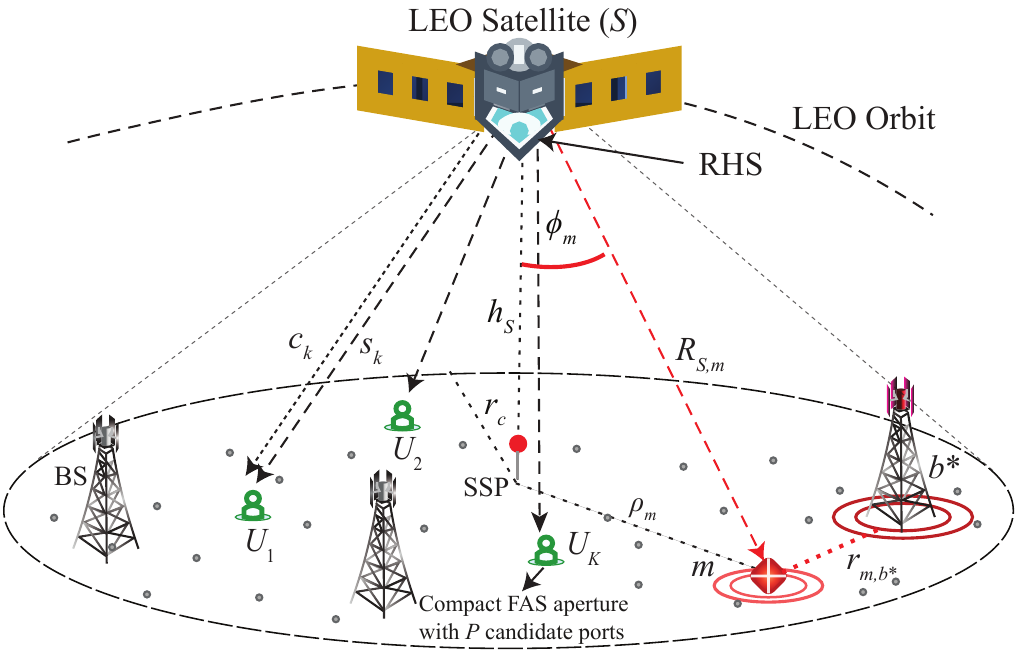}
\caption{System geometry for the LEO RSMA-enabled bistatic ISAC model. Candidate users lie in the satellite coverage footprint; scheduled FAS users receive the RSMA streams, while a representative target is illuminated by the common stream and associated with its nearest sensing-only BS.}
\label{fig:system_geometry}
\end{figure}

\section{System Model}
\label{sec:system_model}

\subsection{Network Topology}
As illustrated in Fig.~\ref{fig:system_geometry}, we consider a quasi-static downlink snapshot of a LEO satellite $S$ at altitude $h_s$. Let \(\mathcal D=\{(x,y):x^2+y^2\le r_c^2\}\) denote the circular satellite coverage footprint. Candidate users and sensing targets are the restrictions to $\mathcal D$ of two independent homogeneous Poisson point processes (PPPs) $\Phi_u$ and $\Phi_r$ with densities $\lambda_u$ and $\lambda_r$. The sensing-only BS process $\Phi_b$, of density $\lambda_b$, is modeled on the terrestrial plane (equivalently, on an infrastructure region much larger than $\mathcal D$) and is independent of $\Phi_u$ and $\Phi_r$. This distinction is deliberate: the satellite footprint limits the candidate users/targets served or illuminated by the satellite, whereas the nearest terrestrial sensing receiver need not lie inside that footprint. For the single-target analysis we condition on a representative target $m\in\Phi_r\cap\mathcal D$; $\lambda_r$ therefore does not enter the conditional single-target metric explicitly. A ground point at horizontal distance $\rho$ from
the sub-satellite point (SSP) has slant range
\begin{equation}
R(\rho) = \sqrt{h_s^2 + \rho^2},
\end{equation}
under a flat-Earth planar approximation valid at the footprint scale considered (the same ``altitude offset over a planar PPP'' device used for the sensing-target altitude in prior terrestrial ISAC analyses, here applied to the satellite-to-ground geometry instead).

The system comprises four node types with distinct roles:
\begin{itemize}
\item \textbf{Satellite $S$} (the only transmitter): carries an $N_x\times N_y$ reconfigurable holographic surface with $N_{\rm RHS}=N_xN_y$ meta-elements and runs RSMA to $K$ scheduled users. The principal-plane analysis below uses $N_h\equiv N_x$, while the coherent orthogonal-dimension factor $N_y$ is retained explicitly in directional power gains. The common stream also serves as the sensing illumination waveform.
\item \textbf{Ground users} $k \in \{1,\ldots,K\} \subset \Phi_u$: FAS-equipped RSMA receivers.
\item \textbf{Ground BSs} $b \in \Phi_b$: sensing-only passive receivers, each with $N_r$ fixed antennas; not communication nodes in this model.
\item \textbf{Targets} $m \in \Phi_r$: static ground/infrastructure reflectors with dimensionless reflection factor $\zeta_m$; Remark~\ref{rem:rcs_normalization} relates it to the physical radar cross section (RCS) $\sigma_m$ in $\mathrm{m}^2$.
\end{itemize}

We analyze a quasi-static snapshot and assume that satellite-motion-induced Doppler is compensated over the considered interval by standard ephemeris-assisted synchronization; residual Doppler is therefore omitted from the performance expressions.

\subsection{Reconfigurable Holographic Aperture Model}
\label{sec:rhs}

The satellite employs a multi-feed RHS in which a single physical amplitude state is shared by all simultaneously transmitted streams. Let $N_f\ge K+1$ denote the number of active feed ports and let
\begin{equation}
\mathbf\Psi=[\boldsymbol\psi_1,\ldots,\boldsymbol\psi_{N_f}]\in\mathbb C^{N_h\times N_f}
\label{eq:feed_matrix}
\end{equation}
collect the deterministic guided reference-wave responses from the feeds to the $N_h$ principal-plane meta-elements. Its $(n,f)$th entry may be written generically as $[\mathbf\Psi]_{n,f}=\eta_{n,f}e^{-jk_g d_{n,f}}$, where $d_{n,f}$ is the guided-wave distance, $\eta_{n,f}$ captures deterministic feed-to-element attenuation, and $k_g$ denotes the guided-wave wavenumber~\cite{11452230,10844052}. The matrix $\mathbf\Psi$ is a fixed hardware coupling matrix and is not redesigned when the user/target geometry or other swept parameters change. The numerical study adopts $k_g/k_0=1$.

Let
\begin{equation}
\boldsymbol\nu=[\nu_1,\ldots,\nu_{N_h}]^T,\qquad 0\le\nu_n\le A_{\max},
\label{eq:shared_state}
\end{equation}
be the \emph{single shared} holographic amplitude state. The corresponding analog transfer matrix is
\begin{equation}
\mathbf F(\boldsymbol\nu)=\operatorname{diag}(\boldsymbol\nu)\mathbf\Psi\in\mathbb C^{N_h\times N_f}.
\label{eq:shared_rhs_transfer}
\end{equation}
Thus, independent stream-wise amplitude patterns are not available: all RSMA streams experience the same $\mathbf F(\boldsymbol\nu)$ and differ only through their feed-domain precoders.

Let $\theta$ denote the principal-plane departure angle measured from boresight. The far-field steering response toward $\theta$ is
\begin{equation}
\begin{aligned}
\mathbf a_t(\theta)&=\big[e^{j\frac{2\pi}{\lambda}\triangle_t^1(\theta)},\ldots,\\
&\hspace{2.8em}e^{j\frac{2\pi}{\lambda}\triangle_t^{N_h}(\theta)}\big]^T\in\mathbb C^{N_h\times1},
\end{aligned}
\end{equation}
where $\triangle_t^n(\theta)$ is the propagation-distance difference of element $n$ relative to a reference element. Since every steering entry has unit magnitude, $\|\mathbf a_t(\theta)\|_2^2=N_h$. For any realized radiating column $\mathbf q$, the directional field is $\mathbf a_t^H(\theta)\mathbf q$.

\paragraph{Separable planar aperture and power normalization}
The physical transmitter is an $N_x\times N_y$ metasurface with $N_h\equiv N_x$ in the principal-plane analysis. We adopt the separable specialization
\begin{equation}
\mathbf a_{2\rm D}(\theta)=\mathbf a_t(\theta)\otimes\mathbf 1_{N_y},
\qquad
\boldsymbol\psi_y=\frac{\mathbf 1_{N_y}}{\sqrt{N_y}},
\label{eq:2d_sep}
\end{equation}
where $\boldsymbol\psi_y$ is the normalized fixed reference-field factor along the orthogonal dimension. Note that $\mathbf a_{2\rm D}(\theta)$ uses the unnormalized physical steering factor $\mathbf 1_{N_y}$, whereas $\boldsymbol\psi_y$ is normalized to preserve radiated-power normalization. The full two-dimensional shared transfer matrix and realized stream column are therefore
\begin{equation}
\mathbf F_{2\rm D}(\boldsymbol\nu)
=\mathbf F(\boldsymbol\nu)\otimes\boldsymbol\psi_y,
\qquad
\mathbf q_i^{2\rm D}=\mathbf q_i\otimes\boldsymbol\psi_y.
\label{eq:2d_transfer}
\end{equation}
The modulation coefficient $\nu_n$ is replicated across the $N_y$ elements of row $n$ and remains bounded by the same dimensionless hardware limit $A_{\max}$; the factor $1/\sqrt{N_y}$ in $\boldsymbol\psi_y$ normalizes the fixed feed field rather than the modulation state. Consequently,
\begin{equation}
\|\mathbf q_i^{2\rm D}\|_2^2=\|\mathbf q_i\|_2^2=a_i^2,
\label{eq:2d_power_norm}
\end{equation}
so increasing $N_y$ does not inject additional radiated power. For the corresponding unit-norm direction $\bar{\mathbf w}_i^{2\rm D}=\bar{\mathbf w}_i\otimes\boldsymbol\psi_y$,
\begin{equation}
\left|\mathbf a_{2\rm D}^H(\theta)\bar{\mathbf w}_i^{2\rm D}\right|^2
=N_y\left|\mathbf a_t^H(\theta)\bar{\mathbf w}_i\right|^2.
\label{eq:upa_gain}
\end{equation}
Thus the reduced $N_h$-dimensional model preserves the full-aperture radiated-power normalization exactly, while the coherent orthogonal dimension appears only in the deterministic LoS directional gain. The physical aperture contains $N_{\rm RHS}=N_hN_y$ meta-elements, and all reported two-dimensional user/target gains use \eqref{eq:upa_gain}. This is a separable principal-plane specialization, not a claim of a fully general two-dimensional electromagnetic factorization.

\subsection{Channel Model}

\subsubsection{Satellite-to-User Channel}
\label{sec:chan_user}

Ground users experience Rician fading with factor $\kappa_c$. User $k$ has $P$ candidate FAS ports over a normalized aperture $W_{\rm FAS}$ wavelengths. At port $p$ and transmit element $n$,
\begin{equation}
h_{k,p,n}=\sqrt{\frac{\kappa_c}{\kappa_c+1}}e^{j\frac{2\pi}{\lambda}\Delta^n_t(\theta_k)}+\sqrt{\frac{1}{\kappa_c+1}}\widetilde h_{k,p,n}.
\label{eq:fas_channel}
\end{equation}
The compact FAS displacement axis is taken transverse to the dominant satellite line-of-sight (LoS) wavefront, so the deterministic LoS phase is common across candidate ports; port diversity therefore arises from the locally scattered field. Instead of compressing the aperture correlation into a single coefficient, we retain the full Bessel correlation matrix~\cite{10753482,11175437}. Let the normalized port coordinates be
\begin{equation}
 x_p=\frac{p-1}{P-1}W_{\rm FAS},\qquad p=1,\ldots,P,
\label{eq:fas_positions}
\end{equation}
with the obvious single-port specialization for $P=1$. Define $\mathbf R_{F}\in\mathbb C^{P\times P}$ by
\begin{equation}
[\mathbf R_F]_{p,q}=J_0\!\left(2\pi|x_p-x_q|\right),\qquad p,q=1,\ldots,P,
\label{eq:fas_corr_matrix}
\end{equation}
where $J_0(\cdot)$ is the zeroth-order Bessel function of the first kind. For each transmit-aperture index $n$, collect the diffuse coefficients across the FAS ports as
\begin{equation}
\begin{aligned}
\widetilde{\mathbf h}_{k,n}
&\triangleq[\widetilde h_{k,1,n},\ldots,\widetilde h_{k,P,n}]^T\\
&=\mathbf R_F^{1/2}\mathbf z_{k,n},\qquad
\mathbf z_{k,n}\sim\mathcal{CN}(\mathbf0,\mathbf I_P).
\end{aligned}
\label{eq:fas_channel_corr}
\end{equation}
with $\{\mathbf z_{k,n}\}$ independent over $k$ and $n$. Hence each port retains unit diffuse variance, while the correlation between any two candidate positions depends explicitly on their physical separation through \eqref{eq:fas_corr_matrix}. This full-matrix model is used in both the selected-port analysis and Monte Carlo validation. Let $\rho_k$ denote user $k$'s horizontal distance from the sub-satellite point. The large-scale gain is the free-space path loss over the corresponding slant range $R_k = \sqrt{h_s^2+\rho_k^2}$:
\begin{equation}
\beta_k = \left(\frac{\lambda}{4\pi R_k}\right)^2.
\end{equation}
Here $\mathbf h_{k,p}=[h_{k,p,1},\ldots,h_{k,p,N_h}]^T$ denotes the unit-large-scale-gain Rician vector and $\widetilde{\mathbf h}_{k,p}=[\tilde h_{k,p,1},\ldots,\tilde h_{k,p,N_h}]^T$ its scattered component; the physical communication channel is $\sqrt{\beta_k}\mathbf h_{k,p}$.

\subsubsection{Satellite-to-Target Channel (Sensing, Long Leg)}

The satellite-to-target channel is likewise modeled as Rician (LoS-dominant illumination path with a residual scattered component), with Rician factor $\kappa_t$:
\begin{align}
\mathbf g_{S,m}&=\sqrt{\beta_{S,m}}\left(\sqrt{\frac{\kappa_t}{\kappa_t+1}}\mathbf a_t(\theta_m)\right.\nonumber\\[-1mm]
&\left.\hspace{4em}+\sqrt{\frac{1}{\kappa_t+1}}\tilde{\mathbf g}_{S,m}\right),\label{eq:rician_target}\\
\tilde{\mathbf g}_{S,m}&\sim\mathcal{CN}(\mathbf0,\mathbf I_{N_h}),\nonumber
\end{align}
where $\rho_m$ denotes the target's horizontal distance from the SSP, $R_{S,m}=\sqrt{h_s^2+\rho_m^2}$ is the satellite--target slant range, $\beta_{S,m}=(\lambda/4\pi R_{S,m})^2$, and $\theta_m$ is the target departure direction relative to the satellite.

\subsubsection{Target-to-Ground-BS Channel (Sensing, Short Leg)}

The target-to-BS link has Rician factor $\kappa_b$. The reflected echo propagates from target $m$ to its associated (nearest) ground BS $b^\star(m) = \arg\min_{b\in\Phi_b}\|\mathbf{x}_b - \mathbf{x}_m\|$ over a much shorter terrestrial-scale range $r_{m,b^\star}$:
\begin{align}
\mathbf h_{m,b^\star}&=\sqrt{\beta_{m,b^\star}}\left(\sqrt{\frac{\kappa_b}{\kappa_b+1}}\mathbf a_r(\phi_m)\right.\nonumber\\[-1mm]
&\left.\hspace{4em}+\sqrt{\frac{1}{\kappa_b+1}}\tilde{\mathbf h}_{m,b^\star}\right),\label{eq:rician_bs}\\
\tilde{\mathbf h}_{m,b^\star}&\sim\mathcal{CN}(\mathbf0,\mathbf I_{N_r}),\nonumber
\end{align}
where $\mathbf{a}_r(\phi_m)\in\mathbb{C}^{N_r\times1}$ is the ground BS's conventional (non-holographic) receive steering vector toward angle-of-arrival $\phi_m$, normalized such that $\|\mathbf a_r(\phi_m)\|_2^2=N_r$, and $\beta_{m,b^\star} = (\lambda/4\pi r_{m,b^\star})^2$. Only the satellite transmit side is holographic; the ground BS receive array is a conventional multi-antenna array so no holographic amplitude-only restriction applies to the receive beamformer $\mathbf{w}_r$ defined in Section~\ref{sec:sensing}.

Because a planar PPP permits an arbitrarily small nearest-neighbor distance, directly using free-space loss would both extrapolate the far-field model into an unphysical near-field regime and make $\mathbb E[r^{-2}]$ divergent. We therefore use the regularized short-leg distance
$\hat r_{m,b^\star}=\sqrt{r_{\min}^2+d_b^2}$ with $r_{\min}>0$, where $d_b$ is the planar distance to the nearest sensing BS. This physically interpretable guard distance makes the short-hop average finite and is used in the sensing analysis below.

\subsection{Communication System (RSMA)}
\label{sec:precoding}

Let $s_c$ denote the RSMA common stream, which jointly encodes the common parts of the $K$ users' messages and also serves as the sensing-illumination waveform, and let $s_k$ denote user $k$'s private stream. The streams are mutually independent, zero mean, and normalized such that $\mathbb E[|s_c|^2]=\mathbb E[|s_k|^2]=1$. The nominal common- and private-stream powers are $P_c\triangleq\beta P_o$ and $P_k\triangleq(1-\beta)P_o/K$, respectively, where $\beta\in[0,1]$ is the RSMA common-power fraction and $P_o$ is the satellite's total nominal transmit-power budget, so that $P_c+\sum_{k=1}^{K}P_k=P_o$.

\paragraph{Angular (ephemeris-based) reference beams}
To make the ergodic performance analysis of Section~\ref{sec:analysis} tractable in closed form---and because slowly varying angular/ephemeris information is substantially easier to maintain from orbit than instantaneous full-dimensional channel-state information at the transmitter (CSIT)---we first construct deterministic \emph{reference} beams from the known directions $\{\theta_k\}_{k=1}^K$ and $\theta_m$:
\begin{align}
\mathbf w_c^{\rm ref}&=\frac{\alpha\sum_{k=1}^K\mathbf a_t(\theta_k)+\tau\mathbf a_t(\theta_m)}{\left\|\alpha\sum_{k=1}^K\mathbf a_t(\theta_k)+\tau\mathbf a_t(\theta_m)\right\|},\label{eq:wc_ideal}\\
\alpha&=\sqrt{1-\tau^2},\qquad 0\le\tau\le1,\nonumber\\
\widetilde{\mathbf W}^{\rm zf}&=\bar{\mathbf A}(\bar{\mathbf A}^H\bar{\mathbf A})^{-1},\nonumber\\
\mathbf w_k^{\rm ref}&=\widetilde{\mathbf w}^{\rm zf}_k/\|\widetilde{\mathbf w}^{\rm zf}_k\|.\label{eq:wk_ideal}
\end{align}
Here $\bar{\mathbf A}=[\mathbf a_t(\theta_1),\ldots,\mathbf a_t(\theta_K)]$ and $\widetilde{\mathbf w}^{\rm zf}_k$ is the $k$th column of $\widetilde{\mathbf W}^{\rm zf}$. Hence $\mathbf a_t(\theta_j)^H\mathbf w_k^{\rm ref}=0$ for $j\ne k$ and $\mathbf a_t(\theta_k)^H\mathbf w_k^{\rm ref}=\sqrt{g_k}$ with $g_k=1/[(\bar{\mathbf A}^H\bar{\mathbf A})^{-1}]_{kk}$.

\paragraph{Shared-state holographic realization}
The common and private reference beams are \emph{not} mapped independently to different meta-element amplitude patterns. Instead, a single deterministic shared state is formed from the aggregate angular reference
\begin{equation}
\mathbf v_{\rm agg}=\sqrt{\beta}\,\mathbf w_c^{\rm ref}+\sqrt{\frac{1-\beta}{K}}\sum_{k=1}^{K}\mathbf w_k^{\rm ref}.
\label{eq:aggregate_reference}
\end{equation}
Let $f_0$ denote a designated fixed reference feed and set $\boldsymbol\psi_{\rm ref}=\boldsymbol\psi_{f_0}$. The shared amplitudes are obtained by the deterministic holographic-recording rule
\begin{equation}
\nu_n=\left[\operatorname{Re}\!\left\{[\mathbf v_{\rm agg}]_n[\boldsymbol\psi_{\rm ref}]_n^*\right\}\right]_0^{A_{\max}},\quad n=1,\ldots,N_h,
\label{eq:shared_recording}
\end{equation}
where $[\cdot]_0^{A_{\max}}$ denotes clipping to $[0,A_{\max}]$. This produces one common $\mathbf F(\boldsymbol\nu)$ for all streams.

For $i\in\{c,1,\ldots,K\}$, let $\mathbf w_i^{\rm ref}$ denote the corresponding reference direction and choose the feed-domain vector by least-squares projection,
\begin{equation}
\widehat{\mathbf b}_i=\mathbf F(\boldsymbol\nu)^\dagger\mathbf w_i^{\rm ref},\qquad
\mathbf b_i=\xi_i\widehat{\mathbf b}_i,
\label{eq:feed_projection}
\end{equation}
where
\begin{equation}
\xi_i=\min\!\left\{1,\frac{1}{\|\widehat{\mathbf b}_i\|_2},\frac{1}{\|\mathbf F(\boldsymbol\nu)\widehat{\mathbf b}_i\|_2}\right\}
\label{eq:stream_scale}
\end{equation}
ensures $\|\mathbf b_i\|_2\le1$ and $\|\mathbf F(\boldsymbol\nu)\mathbf b_i\|_2\le1$. The realized radiating column
\begin{equation}
\mathbf q_i=\mathbf F(\boldsymbol\nu)\mathbf b_i
\label{eq:realized_column}
\end{equation}
has $\|\mathbf q_i\|_2\le1$. The stream-dependent factor $\xi_i$ is purely digital and therefore does not violate the common analog state. Define
\begin{equation}
a_i\triangleq\|\mathbf q_i\|_2,\qquad \bar{\mathbf w}_i\triangleq\mathbf q_i/a_i,
\label{eq:realized_direction}
\end{equation}
for nonzero realized columns; an unradiatable zero column is excluded from the admissible scheduled set.

\begin{remark}[Nominal, feed-domain, and radiated power]\label{rem:amplitude_power}
The nominal stream powers satisfy $P_c+\sum_kP_k=P_o$. Since $\|\mathbf b_i\|_2\le1$, the aggregate feed-domain excitation satisfies $\sum_i P_i\|\mathbf b_i\|_2^2\le P_o$. The actual radiated stream powers are $\widetilde P_i=P_i a_i^2\le P_i$. By \eqref{eq:2d_power_norm}, $a_i^2$ is also the norm squared of the complete $N_x\times N_y$ radiating column, so $\sum_i\widetilde P_i\le P_o$ is a full-aperture radiated-power statement. The analytical bounds use $\widetilde P_i$ together with the unit-norm realized directions $\bar{\mathbf w}_i$.
\end{remark}

The satellite transmit signal can therefore be written equivalently as
\begin{equation}
\begin{aligned}
\mathbf x
&=\sqrt{P_c}\,\mathbf q_c s_c
 +\sum_{k=1}^{K}\sqrt{P_k}\,\mathbf q_k s_k\\
&=\sqrt{\widetilde P_c}\,\bar{\mathbf w}_c s_c
 +\sum_{k=1}^{K}\sqrt{\widetilde P_k}\,\bar{\mathbf w}_k s_k.
\end{aligned}
\label{eq:tx_shared_rhs}
\end{equation}
Importantly, the same $\boldsymbol\nu$ and hence the same $\mathbf F(\boldsymbol\nu)$ are used for all simultaneous RSMA streams; only the feed-domain vectors $\{\mathbf b_c,\mathbf b_k\}$ differ. This explicitly enforces the shared physical RHS state rather than assuming independently realizable stream-wise holographic patterns.

For later use, define the equivalent separable full-aperture scalar projection of stream $i$ at user $k$, port $p$, as
\begin{align}
\chi_{k,p,i}&=\sqrt{\frac{\kappa_c}{\kappa_c+1}}M_{k,i}
+\sqrt{\frac{1}{\kappa_c+1}}U_{k,p,i},\label{eq:equiv_full_projection}\\
M_{k,i}&\triangleq\sqrt{N_y}\,\mathbf a_t^H(\theta_k)\bar{\mathbf w}_i,\nonumber
\end{align}
where $\mathbf U_{k,i}=[U_{k,1,i},\ldots,U_{k,P,i}]^T\sim\mathcal{CN}(\mathbf0,\mathbf R_F)$. Thus $|M_{k,i}|^2$ is exactly the two-dimensional LoS directional gain in \eqref{eq:upa_gain}, whereas the diffuse projected variance remains unity under the adopted separable scattering model.

\paragraph{FAS port selection} Each user $k$ selects, among its $P$ spatially correlated FAS ports, the port maximizing its \emph{desired private-stream} signal power,
\begin{equation}
p_k^\star \triangleq \arg\max_{p\in\{1,\ldots,P\}} |\chi_{k,p,k}|^2,
\label{eq:fas_port}
\end{equation}
This causally realizable rule depends only on the private precoder and the instantaneous port channels. We use port $p=1$ as the reference-port baseline, corresponding to reception without spatial port selection. Section~\ref{sec:analysis} first gives reference-port bounds as a baseline and then derives FAS-selected private- and common-rate bounds at the same selected physical port, including the statistical coupling induced by \eqref{eq:fas_port}.

\paragraph{Communication SINR and rates} At the selected port $p_k^\star$ in \eqref{eq:fas_port}, let $\sigma_k^2$ denote the receiver-noise power at user $k$. With no inter-satellite or inter-cell interference term, the signal-to-interference-plus-noise ratios (SINRs) expressed through \eqref{eq:equiv_full_projection} are
\begin{align}
\gamma_{k,p^\star}^{c}&=\frac{\widetilde P_c\beta_k|\chi_{k,p^\star,c}|^2}{\beta_k\sum_{i=1}^K \widetilde P_i|\chi_{k,p^\star,i}|^2+\sigma_k^2},\label{eq:gamma_common}\\
\gamma_{k,p^\star}^{p}&=\frac{\widetilde P_k\beta_k|\chi_{k,p^\star,k}|^2}{\beta_k\sum_{i\ne k}\widetilde P_i|\chi_{k,p^\star,i}|^2+\sigma_k^2},\label{eq:gamma_rates}\\
R_{c,k}&=\log_2(1+\gamma_{k,p^\star}^c),\quad R_k^p=\log_2(1+\gamma_{k,p^\star}^p).\label{eq:instant_rates}
\end{align}
For ergodic coding, the supported common throughput is $\bar R_c=\min_k\mathbb E[R_{c,k}]$ and the ergodic sum spectral efficiency is $\bar R_{\rm sum}=\bar R_c+\sum_k\mathbb E[R_k^p]$. The corresponding analytical lower bounds are derived in Section~\ref{sec:analysis}. In the numerical study, $\sigma_k^2=\sigma_b^2=k_{\rm B}T_{\rm sys}BL_{\rm ex}$, where $L_{\rm ex}$ is used in linear scale.

\paragraph{Inner-product convention} All communication and sensing projections use the conjugated inner product $\mathbf h^H\mathbf w$. This convention is required for coherent ZF precoding and matched receive combining with the generally complex ground-BS beamformer.

\subsection{User Scheduling}
\label{sec:scheduling}

The angular precoders require $K$ scheduled directions from the candidate pool. Since mapped beams need not remain orthogonal, we use a farthest-point angular scheduler: seed with the candidate nearest the target direction and repeatedly add the candidate maximizing its minimum angular gap to the selected set,
\begin{align}
\mathcal S_1&=\{\arg\min_j|\theta_j-\theta_m|\},\nonumber\\
\mathcal S_{i+1}&=\mathcal S_i\cup\left\{\arg\max_{j\notin\mathcal S_i}\min_{s\in\mathcal S_i}|\theta_j-\theta_s|\right\}.
\label{eq:scheduling}
\end{align}
The complexity is $O(N_{\rm cand}K)$.

\begin{lemma}[Scheduler-induced lower bound on the ZF array gain]\label{lem:sched_gain}
Let $\mathcal S$ be the $K$-user set produced by \eqref{eq:scheduling} (or any $K$-set), with achieved minimum pairwise angular separation $\Delta_{\min}\triangleq\min_{i\ne j\in\mathcal S}|\theta_i-\theta_j|$, all directions within $\theta_{\rm sch}\triangleq\max_{k\in\mathcal S}|\theta_k|$ of boresight. All angular quantities in the analytical inequalities of this lemma are expressed in radians. Suppose
\begin{equation}
\cos(\theta_{\rm sch})\,\Delta_{\min}\le1 \qquad\text{and}\qquad 2\sin(\theta_{\rm sch})\le1.
\label{eq:sched_gain_domain}
\end{equation}
Then the deterministic ZF array gain $g_k=1/[(\bar{\mathbf A}^H\bar{\mathbf A})^{-1}]_{kk}$ of \eqref{eq:wk_ideal} satisfies, for every $k\in\mathcal S$,
\begin{equation}
g_k \;\ge\; N_h - \frac{K-1}{\cos(\theta_{\rm sch})\,\Delta_{\min}}.
\label{eq:sched_gain_bound}
\end{equation}
In particular, $N_h > (K-1)/(\cos(\theta_{\rm sch})\Delta_{\min})$ is sufficient for a strictly positive gain guarantee across the whole scheduled set. The second domain condition in \eqref{eq:sched_gain_domain}, equivalently $\theta_{\rm sch}\le30^\circ$, ensures that the sine monotonicity step used in the proof is valid over the complete angular interval.
\end{lemma}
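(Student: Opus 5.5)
We propose to bound $g_k$ through the Schur-complement identity for the Gram matrix $\mathbf G=\bar{\mathbf A}^H\bar{\mathbf A}$.

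\textbf{Step 1 (reduction to cross-correlations).} By the Schur complement,
\begin{equation}
g_k=1/[\mathbf G^{-1}]_{kk}=N_h-\mathbf c_k^H\mathbf G_{-k}^{-1}\mathbf c_k ,
\end{equation}
where $\mathbf c_k=[\mathbf a_t^H(\theta_j)\mathbf a_t(\theta_k)]_{j\ne k}$ and $\mathbf G_{-k}$ is the Gram matrix of the remaining $K-1$ steering vectors. Rather than analyze $\mathbf G_{-k}^{-1}$, we use the projection interpretation. Here $N_h-g_k=\|\mathbf P_{-k}\mathbf a_t(\theta_k)\|^2$, with $\mathbf P_{-k}$ the orthogonal projector onto the span of the other columns.

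\textbf{Step 2 (triangle bound on the projection).} A crude but sufficient inequality is
\begin{equation}
\|\mathbf P_{-k}\mathbf a\|^2\le\sum_{j\ne k}|\mathbf a_t^H(\theta_j)\mathbf a|^2/N_h .
\end{equation}
This is exact when the other columns are orthogonal and is otherwise only heuristic. Because of that gap, I would instead aim directly for a bound of the form $N_h-g_k\le\sum_{j\ne k}|D_{jk}|$, where $D_{jk}=\mathbf a_t^H(\theta_j)\mathbf a_t(\theta_k)$ is the Dirichlet kernel. The justification I would try first is a Gershgorin/diagonal-dominance argument, $\lambda_{\min}(\mathbf G_{-k})\ge N_h-\sum|D|$. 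Combined with $\mathbf c_k^H\mathbf G_{-k}^{-1}\mathbf c_k\le\|\mathbf c_k\|^2/\lambda_{\min}$, it should yield
\begin{equation}
N_h-g_k\le\max_j|D_{jk}|\,(K-1)
\end{equation}
up to a factor, once $|D_{jk}|\le N_h$ is used.

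\textbf{Step 3 (Dirichlet kernel bound).} For a ULA with half-wavelength spacing,
\begin{equation}
|D_{jk}|=\left|\frac{\sin(N_h\pi(\sin\theta_j-\sin\theta_k)/2)}{\sin(\pi(\sin\theta_j-\sin\theta_k)/2)}\right|\le\frac{1}{|\sin(\pi u/2)|},\qquad u=\sin\theta_j-\sin\theta_k .
\end{equation}
By the mean value theorem, $|u|\ge\cos(\theta_{\rm sch})|\theta_j-\theta_k|\ge\cos(\theta_{\rm sch})\Delta_{\min}$. This requires cosine to be minimized at the endpoint $\theta_{\rm sch}$ over $[-\theta_{\rm sch},\theta_{\rm sch}]$. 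Then $|\sin(\pi u/2)|\ge|u|$ holds for $|u|\le1$, since $\sin(\pi x/2)\ge x$ on $[0,1]$. The condition $2\sin\theta_{\rm sch}\le1$ guarantees $|u|\le1$ over the full interval, so the sine monotonicity/concavity step applies. This gives $|D_{jk}|\le1/(\cos(\theta_{\rm sch})\Delta_{\min})$, and the first condition in \eqref{eq:sched_gain_domain} keeps this consistent with the trivial bound.

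\textbf{Step 4 (combination).} Summing over the $K-1$ interferers gives \eqref{eq:sched_gain_bound}.

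\textbf{Main obstacle.} The hard part is Step 2, converting the quadratic form $\mathbf c_k^H\mathbf G_{-k}^{-1}\mathbf c_k$ into a bound that is linear in $\sum_j|D_{jk}|$ without picking up a factor $N_h/\lambda_{\min}$. I would first test whether the intended argument uses the sharper pairwise bound $|D_{jk}|^2/N_h\le|D_{jk}|$ together with near-orthogonality. Failing that, I would accept the Gershgorin route, in which the bound holds only after exploiting $|D_{jk}|\le N_h$ and the stated domain conditions.
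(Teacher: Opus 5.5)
Your Step 3 matches the paper's argument: the Dirichlet bound $1/|\sin(\pi u/2)|$, the mean-value theorem giving $|u|\ge\cos(\theta_{\rm sch})\Delta_{\min}$, and Jordan's inequality, valid because $|u|\le 2\sin\theta_{\rm sch}\le1$. So the off-diagonal bound $\delta\triangleq\max_{j\ne k}|D_{jk}|\le[\cos(\theta_{\rm sch})\Delta_{\min}]^{-1}$ is fine. The gap is Step 2, which you leave open. You never establish $N_h-g_k\le(K-1)\delta$; you only say it should hold ``up to a factor.'' The condition you cite, $|D_{jk}|\le N_h$, is not what makes it work. Since the lemma claims the constant exactly, Step 4 does not follow from what you wrote.

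The missing idea is that no Schur complement is needed. Since $[\mathbf G^{-1}]_{kk}=\mathbf e_k^T\mathbf G^{-1}\mathbf e_k\le\lambda_{\max}(\mathbf G^{-1})=1/\lambda_{\min}(\mathbf G)$, you get $g_k\ge\lambda_{\min}(\mathbf G)$. Gershgorin on the full Hermitian Gram matrix (diagonal $N_h$, $K-1$ off-diagonal entries per row, each of magnitude at most $\delta$) then gives $\lambda_{\min}(\mathbf G)\ge N_h-(K-1)\delta$ directly; this is the paper's proof.

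Your own route can also be closed, but it needs an explicit case split rather than $|D_{jk}|\le N_h$. Gershgorin on $\mathbf G_{-k}$ gives $\lambda_{\min}(\mathbf G_{-k})\ge N_h-(K-2)\delta$, and $\|\mathbf c_k\|^2\le(K-1)\delta^2$, so
\[
N_h-g_k\le\frac{(K-1)\delta^2}{N_h-(K-2)\delta}.
\]
This right-hand side is at most $(K-1)\delta$ exactly when $(K-1)\delta\le N_h$. In the complementary case $(K-1)\delta>N_h$, the right-hand side of \eqref{eq:sched_gain_bound} is negative, so the claim holds trivially because $g_k>0$. With this split, Step 4 goes through with constant exactly one. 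The direct bound $g_k\ge\lambda_{\min}(\mathbf G)$ is shorter and avoids the case analysis altogether.
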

\begin{proof}
For half-wavelength spacing, the Dirichlet kernel gives $|\mathbf a_t(\theta_i)^H\mathbf a_t(\theta_j)|\le1/|\sin[\pi(\sin\theta_j-\sin\theta_i)/2]|$. Under \eqref{eq:sched_gain_domain}, the mean-value theorem and Jordan's inequality yield $|\mathbf a_t(\theta_i)^H\mathbf a_t(\theta_j)|\le[\cos(\theta_{\rm sch})\Delta_{\min}]^{-1}$. The Gram matrix $\mathbf G=\bar{\mathbf A}^H\bar{\mathbf A}$ therefore has diagonal $N_h$ and off-diagonal magnitudes bounded by this quantity. Gershgorin's theorem gives $\lambda_{\min}(\mathbf G)\ge N_h-(K-1)/[\cos(\theta_{\rm sch})\Delta_{\min}]$. Since $[\mathbf G^{-1}]_{kk}\le1/\lambda_{\min}(\mathbf G)$, \eqref{eq:sched_gain_bound} follows.
\end{proof}

\paragraph{Scheduler quality and scope}
The rule in \eqref{eq:scheduling} follows the farthest-point traversal principle: after the target-nearest seed is fixed, each subsequent user maximizes its minimum angular distance from the already selected set. We use it as a low-complexity geometry-conditioning heuristic and compare it directly with random and exhaustive max-separation scheduling in Section~\ref{sec:numerical_results}; no approximation factor for the pairwise-separation objective is required by Lemma~\ref{lem:sched_gain}. The lemma bounds only the pre-mapping ZF gain; post-mapping leakage still depends on the full angle geometry and reference-wave phase profile.

\subsection{Sensing System (Bistatic)}
\label{sec:sensing}

The common stream $s_c$ illuminates target $m$; its echo, scaled by the dimensionless reflection factor $\zeta_m$ associated with physical RCS $\sigma_m$, propagates to the target's nearest ground BS $b^\star$, which applies a conventional unit-norm receive beamformer $\mathbf w_r$ matched to the target angle of arrival $\phi_m$. Let $\mathbf n_b\sim\mathcal{CN}(\mathbf0,\sigma_b^2\mathbf I_{N_r})$ denote the sensing-BS receiver noise. With bandwidth $B$ and coherent sensing interval $T_s$, the matched-filter gain is $G_p\triangleq BT_s$.

Consistent with the separable full-aperture model, define
\begin{align}
\chi_{m,c}&=\sqrt{\frac{\kappa_t}{\kappa_t+1}}M_{m,c}+\sqrt{\frac{1}{\kappa_t+1}}U_{m,c},\\
M_{m,c}&\triangleq\sqrt{N_y}\,\mathbf a_t^H(\theta_m)\bar{\mathbf w}_c,\qquad U_{m,c}\sim\mathcal{CN}(0,1).
\label{eq:target_projection}
\end{align}
The sensing BS is synchronized to the known common waveform through ephemeris-assisted timing/frequency compensation and network-side waveform knowledge. Neglecting residual direct-path leakage and the vanishing cross-correlation of independently encoded private streams in the considered range--Doppler cell, the sensing metric is noise limited:
\begin{equation}
\gamma_m^{s} = \frac{\widetilde P_c\,\zeta_m\, G_p}{\sigma_b^2}
\left|\mathbf w_r^H \mathbf h_{m,b^\star}\right|^2\left|\chi_{m,c}\right|^2.
\label{eq:sensing_sinr}
\end{equation}
Expanding via \eqref{eq:rician_target}--\eqref{eq:rician_bs},
$\gamma_m^s$ is proportional to the product of two independent
Rician quadratic-form gains and the two-leg path loss
$\beta_{S,m}\beta_{m,b^\star}$. The short-leg distance
$r_{m,b^\star}$ follows the nearest-neighbor distribution induced by the sensing-BS PPP $\Phi_b$. Let $\Gamma_s>0$ denote the prescribed sensing-SNR threshold; a target is declared detected when $\gamma_m^s \ge \Gamma_s$. The closed-form average $\bar\gamma^s = \mathbb{E}[\gamma_m^s]$ is derived in Theorem~\ref{thm:sensing}.

\begin{remark}[RCS normalization]\label{rem:rcs_normalization}
The two Friis one-way gains in \eqref{eq:sensing_sinr} contribute $\lambda^4/(4\pi)^4$, whereas the standard bistatic radar equation contains $\lambda^2\sigma_m/(4\pi)^3$. Dimensional consistency therefore requires the dimensionless reflection factor
\begin{equation}
\zeta_m=\frac{4\pi}{\lambda^2}\sigma_m,
\label{eq:rcs_mapping}
\end{equation}
where $\sigma_m$ is the physical RCS in $\mathrm m^2$.
\end{remark}

The above resolution-cell model isolates the bistatic propagation and RHS/FAS effects targeted by this performance analysis. Residual direct-path leakage, finite-integration private-stream leakage, and explicit cancellation design are outside the present scope.

\section{Ergodic Performance Analysis}
\label{sec:analysis}

This section derives tractable ergodic communication bounds and the average bistatic sensing SNR under the deterministic angular precoder design of Section~\ref{sec:precoding}. Theorems~\ref{thm:private} and~\ref{thm:common} provide reference-port baselines. Lemma~\ref{lem:fas} and Theorem~\ref{thm:fas_private} then account for best-of-$P$ FAS selection in the private layer, including its induced interference coupling. Lemma~\ref{lem:fas_crosslog} and Theorem~\ref{thm:fas_common} extend the same selected-port treatment to common-stream decoding, so both SIC stages are evaluated at the physical port chosen by \eqref{eq:fas_port}. The fixed-range expressions are extended to angle-conditioned spatial averaging in Section~\ref{sec:spatial_avg}.

\subsection{Preliminaries}

\begin{lemma}[Rician quadratic form against a fixed vector]\label{lem:rician}
Let $\mathbf{h} = \sqrt{\frac{\kappa}{\kappa+1}}\bar{\mathbf{a}} + \sqrt{\frac{1}{\kappa+1}}\tilde{\mathbf{h}}$ with $\tilde{\mathbf{h}}\sim\mathcal{CN}(\mathbf{0},\mathbf{I}_N)$, and let $\mathbf{w}$ be any \emph{deterministic}, unit-norm vector. Then $\mathbf{h}^H\mathbf{w} \sim \mathcal{CN}\!\left(\sqrt{\tfrac{\kappa}{\kappa+1}}\bar{\mathbf{a}}^H\mathbf{w},\ \tfrac{1}{\kappa+1}\right)$ (by rotational invariance of the isotropic complex Gaussian $\tilde{\mathbf h}$), so that $G=|\mathbf{h}^H\mathbf{w}|^2$ is (a scaled) noncentral chi-square with 2 degrees of freedom, with
\begin{align}
\mathbb E[G]&=\frac{\kappa g_w+1}{\kappa+1},\nonumber\\
\mathbb E[\ln G]&=-\ln(\kappa+1)+S(\kappa g_w),\nonumber\\
g_w&\triangleq|\bar{\mathbf a}^H\mathbf w|^2.
\label{eq:lem1}
\end{align}
where $S(\lambda_0)\triangleq \sum_{j=0}^{\infty} e^{-\lambda_0}\lambda_0^j\psi(j+1)/j!$, and $\psi(\cdot)$ denotes the digamma function. This series follows from the Poisson-mixture representation of the noncentral chi-square distribution.
\end{lemma}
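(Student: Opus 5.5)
The plan has three steps. First establish the Gaussian law of $\mathbf h^H\mathbf w$, then compute the first moment, and finally obtain the log moment from the Poisson-mixture representation of the noncentral chi-square.

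\textbf{Distribution of the projection.} Write $\mathbf h^H\mathbf w=\sqrt{\kappa/(\kappa+1)}\,\bar{\mathbf a}^H\mathbf w+\sqrt{1/(\kappa+1)}\,\tilde{\mathbf h}^H\mathbf w$. Since $\mathbf w$ is deterministic, $\tilde{\mathbf h}^H\mathbf w$ is a linear functional of a circularly symmetric Gaussian vector. It is therefore circularly symmetric complex Gaussian with zero mean and variance $\mathbf w^H\mathbb E[\tilde{\mathbf h}\tilde{\mathbf h}^H]\mathbf w=\|\mathbf w\|_2^2=1$; equivalently, apply a unitary completion of $\mathbf w$ and use the rotational invariance of $\mathcal{CN}(\mathbf 0,\mathbf I_N)$. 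This gives the stated $\mathcal{CN}$ law with mean $\mu\triangleq\sqrt{\kappa/(\kappa+1)}\,\bar{\mathbf a}^H\mathbf w$ and variance $1/(\kappa+1)$.

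\textbf{First moment.} From this law, $\mathbb E[G]=|\mu|^2+\tfrac{1}{\kappa+1}=\tfrac{\kappa g_w+1}{\kappa+1}$.

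\textbf{Log moment.} Normalize by setting $X\triangleq(\kappa+1)G=|Z|^2$, where $Z\sim\mathcal{CN}(\sqrt{\kappa}\,\bar{\mathbf a}^H\mathbf w,1)$. Then $2X$ is noncentral $\chi^2_2$ with noncentrality $2\lambda_0$, where $\lambda_0=\kappa g_w$. I will use the standard Poisson mixture: conditioned on $J\sim\mathrm{Poisson}(\lambda_0)$, $X$ is $\mathrm{Gamma}(J+1,1)$. This can be verified directly by expanding the Rice density $f_X(x)=e^{-x-\lambda_0}I_0(2\sqrt{\lambda_0x})$ through the series of $I_0$, which yields $\sum_j e^{-\lambda_0}\tfrac{\lambda_0^j}{j!}\,\tfrac{x^je^{-x}}{j!}$. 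For a $\mathrm{Gamma}(j+1,1)$ variable, $\mathbb E[\ln X]=\psi(j+1)$ by differentiating $\Gamma(s)$ under the integral sign. Averaging over $J$ gives $\mathbb E[\ln X]=S(\lambda_0)$, and hence $\mathbb E[\ln G]=S(\kappa g_w)-\ln(\kappa+1)$.

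\textbf{Main obstacle.} The only delicate step is justifying the interchange of expectation and series, i.e., Fubini/Tonelli for $\sum_j\int|\ln x|\,(\cdot)\,dx$. I will handle it by showing absolute convergence: $\mathbb E|\ln X_j|\le\psi(j+1)+c$ grows only like $\ln j$ (using that $|\ln x|$ is integrable against $x^je^{-x}$ near $0$ for $j\ge0$), and this is summable against Poisson weights. That $S$ is finite is then immediate.
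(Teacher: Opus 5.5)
Your proposal is correct and follows the same route as the paper, which justifies the lemma in-line: rotational invariance of $\tilde{\mathbf h}$ gives the $\mathcal{CN}$ law, and the Poisson-mixture representation of the noncentral chi-square gives the log moment. Your extra details fill in what the paper leaves implicit: the Bessel-series check that $X$ is a Poisson-weighted mixture of $\mathrm{Gamma}(J+1,1)$ laws, the identity $\mathbb E[\ln X_j]=\psi(j+1)$, and the Tonelli justification for interchanging expectation and sum.
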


\begin{lemma}[Exact direction-aware shared-RHS realized gain]\label{lem:eta}
For any deterministic unit-norm principal-plane realized direction $\bar{\mathbf w}_i$ in \eqref{eq:realized_direction}, the corresponding full $N_x\times N_y$ directional power gain toward deterministic angle $\theta$ is
\begin{equation}
g(\theta;i)=N_y\left|\mathbf a_t^H(\theta)\bar{\mathbf w}_i\right|^2
=\left|\mathbf a_{2\rm D}^H(\theta)\bar{\mathbf w}_i^{2\rm D}\right|^2.
\label{eq:holo_gain}
\end{equation}
Because $\boldsymbol\nu$ and the feed-domain projections are deterministic functions of the scheduled/target angles, these gains are exactly computable finite-dimensional quantities. We use
\begin{align}
g_{k,k}&=g(\theta_k;k),&
 g_{k,i}&=g(\theta_k;i),\nonumber\\
g_{c,k}&=g(\theta_k;c),&
 g_{c,m}&=g(\theta_m;c).
\label{eq:gain_defs}
\end{align}
The realized-column norms are accounted for separately through $\widetilde P_i=P_i a_i^2$.
\end{lemma}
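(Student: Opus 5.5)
The plan is to verify the two equalities in \eqref{eq:holo_gain} directly from the separable construction. After that, I will argue that every quantity entering the gains is a deterministic finite-dimensional object, so each gain is exactly computable.

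\textbf{Step 1: the Kronecker identity.} Using the mixed-product rule, I compute
$\mathbf a_{2\rm D}^H(\theta)\bar{\mathbf w}_i^{2\rm D}
=(\mathbf a_t(\theta)\otimes\mathbf 1_{N_y})^H(\bar{\mathbf w}_i\otimes\boldsymbol\psi_y)
=(\mathbf a_t^H(\theta)\bar{\mathbf w}_i)(\mathbf 1_{N_y}^H\boldsymbol\psi_y)$.
Since $\boldsymbol\psi_y=\mathbf 1_{N_y}/\sqrt{N_y}$, we have $\mathbf 1_{N_y}^H\boldsymbol\psi_y=N_y/\sqrt{N_y}=\sqrt{N_y}$. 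Squaring the magnitude then gives $N_y|\mathbf a_t^H(\theta)\bar{\mathbf w}_i|^2$, which is \eqref{eq:upa_gain}.

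\textbf{Step 2: unit norm of the realized direction.} I note that $\bar{\mathbf w}_i^{2\rm D}$ is unit norm, because $\|\bar{\mathbf w}_i\otimes\boldsymbol\psi_y\|_2=\|\bar{\mathbf w}_i\|_2\|\boldsymbol\psi_y\|_2=1$, consistent with \eqref{eq:2d_power_norm}. This shows that $g(\theta;i)$ is a pure directional power gain, with no hidden amplitude factor. The radiated amplitude $a_i$ was already factored out in \eqref{eq:realized_direction} and is carried separately by $\widetilde P_i=P_ia_i^2$.

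\textbf{Step 3: determinism and exact computability.} I trace the dependency chain. The angles $\{\theta_k\},\theta_m$ determine the reference beams \eqref{eq:wc_ideal}--\eqref{eq:wk_ideal}, which are well defined whenever $\bar{\mathbf A}^H\bar{\mathbf A}$ is invertible; Lemma~\ref{lem:sched_gain} guarantees this under its conditions. These beams give $\mathbf v_{\rm agg}$, then the clipped state $\boldsymbol\nu$ via \eqref{eq:shared_recording}, then $\mathbf F(\boldsymbol\nu)$ with the fixed $\mathbf\Psi$. Next come the pseudo-inverse projection $\widehat{\mathbf b}_i$, the scalar $\xi_i$, the column $\mathbf q_i$, and finally $\bar{\mathbf w}_i$ for nonzero columns (zero columns are excluded by admissibility). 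Every map in this chain is a deterministic finite-dimensional operation: linear algebra, Moore--Penrose inverse, clipping, and norms. None involves the fading variables $\widetilde{\mathbf h}$, $\mathbf z_{k,n}$, or $U$. Consequently $g(\theta;i)$, evaluated at the deterministic angles, is a deterministic scalar. The specializations in \eqref{eq:gain_defs} are obtained by substituting $(\theta,i)=(\theta_k,k),(\theta_k,i),(\theta_k,c),(\theta_m,c)$.

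\textbf{Main obstacle.} The only delicate point is well-posedness rather than computation. $\mathbf F(\boldsymbol\nu)^\dagger$ is always defined, but $\mathbf q_i$ can vanish if $\mathbf w_i^{\rm ref}$ is orthogonal to the range of $\mathbf F(\boldsymbol\nu)$. I would handle this by invoking the admissibility convention stated after \eqref{eq:realized_direction}. Separately, determinism should be stated conditionally on the scheduled geometry, since the angles themselves are random through the PPP. The gains are therefore exact given the snapshot, and this is precisely the sense in which they are later plugged into Lemma~\ref{lem:rician} as the fixed $g_w$.
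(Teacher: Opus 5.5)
Your proposal is correct and matches the paper. The paper gives no separate proof of this lemma; it rests on the Kronecker mixed-product identity behind \eqref{eq:upa_gain} (with $\mathbf 1_{N_y}^H\boldsymbol\psi_y=\sqrt{N_y}$) together with the fact that $\boldsymbol\nu$, the feed-domain projections, and hence $\bar{\mathbf w}_i$ are deterministic functions of the scheduled and target angles, which is exactly your Steps~1--3.

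One minor precision: Lemma~\ref{lem:sched_gain} certifies invertibility of $\bar{\mathbf A}^H\bar{\mathbf A}$ only when its bound is strictly positive, i.e., $N_h>(K-1)/(\cos(\theta_{\rm sch})\Delta_{\min})$, not under the domain conditions \eqref{eq:sched_gain_domain} alone. This well-posedness point is not needed anyway, since the statement takes $\bar{\mathbf w}_i$ as given.
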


\paragraph{Scalar efficiency baseline}
A geometry-independent proxy $\eta=1/4$ ($-6.02$~dB) is retained only as a simple reference. For this benchmark, each unconstrained unit-norm reference direction is preserved and only its radiated amplitude is reduced,
\begin{equation}
\mathbf w_i^{\rm scalar}=\sqrt\eta\,\mathbf w_i^{\rm ref},\qquad i\in\{c,1,\ldots,K\},
\end{equation}
so that $\widetilde P_i^{\rm scalar}=\eta P_i$. This proxy cannot represent the direction-dependent self- and cross-gains created when all streams are projected through the same amplitude-constrained multi-feed transfer matrix; the deployed model therefore uses the exact deterministic gains in \eqref{eq:holo_gain}--\eqref{eq:gain_defs}.

\begin{lemma}[FAS best-of-$P$ port-selection self-gain log-moment]\label{lem:fas}
Let $G_{k,p} \triangleq |\mathbf h_{k,p}^H\bar{\mathbf w}_k|^2$ and $G_{k,\max}\triangleq\max_{1\le p\le P}G_{k,p}$. Under the full Bessel correlation model \eqref{eq:fas_corr_matrix}--\eqref{eq:fas_channel_corr}, define the projected diffuse vector
\begin{equation}
\mathbf U_k\triangleq[U_{k,1},\ldots,U_{k,P}]^T
\sim\mathcal{CN}(\mathbf0,\mathbf R_F),
\label{eq:projected_fas_vector}
\end{equation}
which follows because the same deterministic unit-norm beam $\bar{\mathbf w}_k$ is projected onto every port and the diffuse coefficients are independent across transmit-aperture indices. Writing $L_k=\sqrt{g_{k,k}}$ without loss of generality after absorbing the deterministic phase, the port projections are
\begin{equation}
H_{k,p}=\sqrt{\frac{\kappa_c}{\kappa_c+1}}L_k
+\sqrt{\frac{1}{\kappa_c+1}}U_{k,p},
\qquad G_{k,p}=|H_{k,p}|^2.
\label{eq:fullcorr_port_projection}
\end{equation}
Let $\mathbf z\sim\mathcal{CN}(\mathbf0,\mathbf I_P)$ and $\mathbf U_k=\mathbf R_F^{1/2}\mathbf z$. Define
\begin{equation}
\mathcal G_k(\mathbf z)\triangleq
\max_{1\le p\le P}\left|
\sqrt{\frac{\kappa_c}{\kappa_c+1}}L_k+
\frac{[\mathbf R_F^{1/2}\mathbf z]_p}{\sqrt{\kappa_c+1}}
\right|^2.
\label{eq:fullcorr_gmax_integrand}
\end{equation}
Then the selected-self log moment admits the exact $P$-dimensional complex-Gaussian integral
\begin{equation}
\mathbb E[\ln G_{k,\max}]
=\frac{1}{\pi^P}\int_{\mathbb C^P}
 e^{-\|\mathbf z\|_2^2}\ln \mathcal G_k(\mathbf z)\,d\mathbf z.
\label{eq:fas_logmoment}
\end{equation}
Moreover, with
\begin{equation}
p_k^\star=\arg\max_p G_{k,p},\qquad
A_k^\star\triangleq U_{k,p_k^\star},
\label{eq:selected_diffuse_fullcorr}
\end{equation}
the moments $\mu_{A,k}=\mathbb E[A_k^\star]$ and $\nu_{A,k}=\mathbb E[|A_k^\star|^2]$ are given by the same Gaussian integral with the integrands $[\mathbf R_F^{1/2}\mathbf z]_{p_k^\star}$ and $|[\mathbf R_F^{1/2}\mathbf z]_{p_k^\star}|^2$, respectively. These quantities depend on $(\kappa_c,g_{k,k},\mathbf R_F)$ and are evaluated numerically using deterministic Gaussian quadrature for small $P$ or low-discrepancy Gaussian quadrature/quasi-Monte-Carlo integration for larger $P$.
\end{lemma}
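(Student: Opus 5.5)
We establish the claim in three steps. First we identify the joint law of the port projections. Then we reduce the selected-port log moment to a change of variables. Finally we note that the same representation handles the selected-diffuse moments.

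\emph{Joint law of the projections.} Stack the diffuse coefficients across aperture indices and write the scattered part of $\mathbf h_{k,p}^H\bar{\mathbf w}_k$ as
\[
U_{k,p}=\sum_{n=1}^{N_h}\widetilde h_{k,p,n}^*[\bar{\mathbf w}_k]_n .
\]
By \eqref{eq:fas_channel_corr}, the port vectors $\widetilde{\mathbf h}_{k,n}=\mathbf R_F^{1/2}\mathbf z_{k,n}$ are independent over $n$. Because $\bar{\mathbf w}_k$ is deterministic, $\mathbf U_k$ is a linear combination of independent circular Gaussian vectors, and hence is itself circularly symmetric Gaussian with covariance
\[
\sum_n|[\bar{\mathbf w}_k]_n|^2\,\mathbf R_F^{*}=\mathbf R_F^{*}.
\]
Here we use $\|\bar{\mathbf w}_k\|_2=1$. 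Since $\mathbf R_F$ is real symmetric (Bessel entries), $\mathbf R_F^*=\mathbf R_F$, which gives \eqref{eq:projected_fas_vector}. Conjugation does not change a circular law.

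The deterministic LoS part is the same at every port, because the LoS phase is common across ports under the transverse-axis assumption. It equals $\sqrt{\kappa_c/(\kappa_c+1)}\,\mathbf a_t^H(\theta_k)\bar{\mathbf w}_k$. Multiplying all $H_{k,p}$ by one common unit-modulus phase removes the phase of this term. This leaves each $G_{k,p}$ unchanged, and by circular symmetry it also leaves the law of $\mathbf U_k$ unchanged. So we may take the LoS amplitude to be $L_k=\sqrt{g_{k,k}}$, which yields \eqref{eq:fullcorr_port_projection}. (If one wants $g_{k,k}$ to include the $N_y$ factor, the same step applies with the separable projection \eqref{eq:equiv_full_projection}.)

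\emph{Integral representation.} Write $\mathbf U_k=\mathbf R_F^{1/2}\mathbf z$ with $\mathbf z\sim\mathcal{CN}(\mathbf0,\mathbf I_P)$, whose density is $\pi^{-P}e^{-\|\mathbf z\|_2^2}$. Then $G_{k,\max}=\mathcal G_k(\mathbf z)$ exactly, and \eqref{eq:fas_logmoment} is the law of the unconscious statistician. The argmax $p_k^\star$ is a measurable function of $\mathbf z$; ties occur on a null set. Therefore $A_k^\star=[\mathbf R_F^{1/2}\mathbf z]_{p_k^\star(\mathbf z)}$, and $\mu_{A,k}$ and $\nu_{A,k}$ are the stated Gaussian integrals. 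These integrals depend only on $(\kappa_c,L_k,\mathbf R_F)$.

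\emph{Main obstacle: finiteness of the log integral.} The step needing care is showing that $\mathbb E[\ln G_{k,\max}]$ is finite. This holds even if $\mathbf R_F$ is singular, which happens for closely spaced ports.

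The upper side is easy. Since $\ln\mathcal G_k\le \mathcal G_k$ and $\mathbb E[\mathcal G_k]\le\sum_p\mathbb E[G_{k,p}]<\infty$ by Lemma~\ref{lem:rician}, the positive part is integrable.

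For the lower side, use $\mathcal G_k\ge G_{k,1}$. The variable $G_{k,1}$ is a scaled noncentral chi-square whose log moment is finite by Lemma~\ref{lem:rician}, namely $-\ln(\kappa_c+1)+S(\kappa_c g_{k,k})$. Hence $\ln\mathcal G_k\ge\ln G_{k,1}$ is bounded below by an integrable function.

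Finally, $\nu_{A,k}\le\mathbb E\bigl[\sum_p|U_{k,p}|^2\bigr]=P$, and Cauchy--Schwarz gives $|\mu_{A,k}|\le\sqrt{\nu_{A,k}}$. So all three quantities are finite and well defined. The remark about numerical quadrature then needs no proof.
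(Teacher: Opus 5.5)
Your proof is correct and follows the paper's route: the deterministic unit-norm projection preserves the port covariance, the common LoS phase is absorbed to reach the Rician port form, and the substitution $\mathbf U_k=\mathbf R_F^{1/2}\mathbf z$ with averaging against $\pi^{-P}e^{-\|\mathbf z\|_2^2}$ gives the integrals, and you usefully add details the paper leaves implicit (the $\mathbf R_F^*=\mathbf R_F$ step, measurability of the argmax, and integrability of $\ln G_{k,\max}$ via $G_{k,1}\le G_{k,\max}\le\sum_p G_{k,p}$). One small caveat is that the phase rotation leaves $G_{k,\max}$ and $\nu_{A,k}$ invariant but multiplies $A_k^\star$ by $e^{-j\arg M_{k,k}}$, so $\mu_{A,k}$ is the rotated-frame moment; this matches the statement's convention, but it matters whenever $\mu_{A,k}$ is later combined with unrotated quantities such as $M_{k,i}$.
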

\begin{proof}
For each transmit-aperture index $n$, the diffuse port vector has covariance $\mathbf R_F$. Applying the same deterministic unit-norm projection $\bar{\mathbf w}_k$ across all ports preserves this covariance, giving \eqref{eq:projected_fas_vector}. Equation~\eqref{eq:fullcorr_port_projection} then follows from the Rician model. Substituting $\mathbf U_k=\mathbf R_F^{1/2}\mathbf z$ into the definitions of $G_{k,\max}$ and $A_k^\star$ and averaging with respect to the standard complex-Gaussian density $\pi^{-P}e^{-\|\mathbf z\|_2^2}$ gives the stated integral representations directly.
\end{proof}

\begin{remark}[Selection-induced interference coupling]\label{rem:fas_coupling}
Lemma~\ref{lem:fas} characterizes the selected self-gain $G_{k,\max}$ only. In general, the interference gain $G_{k,i}(p_k^\star)$ observed at the same selected port is \emph{not} distributed as the interference gain at an arbitrary reference port because both projections are generated by the same scattered channel vector. Independence would hold for orthogonal mapped beam directions, but the shared-RHS projection does not generally preserve the orthogonality of the ideal ZF beams. Therefore, directly combining the best-port numerator statistic with an unchanged reference-port interference mean is not generally justified. Theorem~\ref{thm:fas_private} resolves this coupling by explicitly retaining the post-projection beam correlation.
\end{remark}

\subsection{Communication Rates}

\begin{theorem}[Ergodic private rate, conservative lower bound]\label{thm:private}
Under the angular precoder design, Lemma~\ref{lem:rician}, and the port-1/no-inter-cell-interference setting,
\begin{align}
D_k^{p}&\triangleq \beta_k\sum_{i\ne k}\tilde P_i
\frac{\kappa_c g_{k,i}+1}{\kappa_c+1}+\sigma_k^2,\nonumber\\
R_k^{p,\rm Lb}&=\log_2\!\left(1+
\frac{\tilde P_k\beta_k e^{S(\kappa_c g_{k,k})}}
{(\kappa_c+1)D_k^p}\right),\nonumber\\
\bar R_k^p&\ge R_k^{p,\rm Lb}.
\label{eq:thm_private}
\end{align}
with $g_{k,k}, g_{k,i}$ the exact shared-RHS realized gains of Lemma~\ref{lem:eta} \eqref{eq:gain_defs}, and $\tilde P_i = P_i a_i^2$ the effective radiated stream powers of Remark~\ref{rem:amplitude_power}.
\end{theorem}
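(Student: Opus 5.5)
The plan is the standard log-convexity argument: write the rate as a log-sum-exp of a log-SINR and move expectations inside, handling the numerator and denominator separately. At the reference port $p=1$, set $X\triangleq\widetilde P_k\beta_k|\chi_{k,1,k}|^2$ for the desired power. Set $Y\triangleq\beta_k\sum_{i\ne k}\widetilde P_i|\chi_{k,1,i}|^2+\sigma_k^2$ for interference plus noise. Then
\begin{equation*}
R_k^p=\log_2\!\left(1+e^{\ln X-\ln Y}\right).
\end{equation*}
The function $t\mapsto\log_2(1+e^t)$ is convex, so Jensen's inequality gives
\begin{equation*}
\bar R_k^p\ge\log_2\!\left(1+e^{\mathbb E[\ln X]-\mathbb E[\ln Y]}\right).
\end{equation*}
This step needs no independence between $X$ and $Y$. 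That matters here, because the realized shared-RHS directions $\bar{\mathbf w}_i$ are generally non-orthogonal, so $\chi_{k,1,k}$ and $\chi_{k,1,i}$ are correlated through the common diffuse vector $\widetilde{\mathbf h}_{k,1}$. This is precisely the ``without assuming desired/interference independence'' claim.

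\textbf{Numerator.} I will apply Lemma~\ref{lem:rician} with $\bar{\mathbf a}=\mathbf a_t(\theta_k)$ and the deterministic unit-norm direction $\bar{\mathbf w}_k$. One point needs care: $|\chi_{k,1,k}|^2$ carries the factor $N_y$ in its LoS part, via $M_{k,k}=\sqrt{N_y}\mathbf a_t^H(\theta_k)\bar{\mathbf w}_k$, but keeps unit diffuse variance. I will therefore apply the lemma directly to the scalar $\chi_{k,1,k}\sim\mathcal{CN}\big(\sqrt{\kappa_c/(\kappa_c+1)}M_{k,k},1/(\kappa_c+1)\big)$. The noncentral chi-square log-moment then has noncentrality $\kappa_c|M_{k,k}|^2=\kappa_c g_{k,k}$, using Lemma~\ref{lem:eta}. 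This gives
\begin{equation*}
\mathbb E[\ln X]=\ln(\widetilde P_k\beta_k)-\ln(\kappa_c+1)+S(\kappa_c g_{k,k}).
\end{equation*}

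\textbf{Denominator.} Since $\ln$ is concave, Jensen's inequality gives $\mathbb E[\ln Y]\le\ln\mathbb E[Y]$. Linearity of expectation, with the mean formula of Lemma~\ref{lem:rician} (again with $N_y$ absorbed into $g_{k,i}$), gives $\mathbb E|\chi_{k,1,i}|^2=(\kappa_c g_{k,i}+1)/(\kappa_c+1)$. Hence $\mathbb E[Y]=D_k^p$. This bound uses only marginal means, so correlations among the interference terms are irrelevant. Substituting both pieces into the Jensen bound, monotonicity of $\log_2(1+e^t)$ in $t$ yields
\begin{equation*}
\bar R_k^p\ge\log_2\!\left(1+\frac{\widetilde P_k\beta_k e^{S(\kappa_c g_{k,k})}}{(\kappa_c+1)D_k^p}\right)=R_k^{p,\rm Lb}.
\end{equation*}

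\textbf{Main obstacle.} The delicate step is justifying Lemma~\ref{lem:rician} on the two-dimensional equivalent projection. Two things must hold: the directions $\bar{\mathbf w}_i$ must be deterministic, which is true because $\boldsymbol\nu$, $\mathbf b_i$ and $\xi_i$ depend only on the angles; and the diffuse projection must have unit variance, so that only the LoS part is scaled by $N_y$. I will verify this from the separable model \eqref{eq:equiv_full_projection} and the rotational invariance of $\widetilde{\mathbf h}_{k,1}$. I will also note that the zero-column exclusion guarantees $a_k>0$, so $\bar{\mathbf w}_k$ is well defined.
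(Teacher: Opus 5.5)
Your proposal is correct and follows essentially the same route as the paper's proof. Both use Jensen on the convex map $t\mapsto\log_2(1+e^t)$ applied to $\ln X-\ln Y$, then $\mathbb E[\ln Y]\le\ln\mathbb E[Y]$, with no independence assumed, and then substitute Lemma~\ref{lem:rician}'s log moment and mean power, with the $N_y$ factor absorbed into $g_{k,k}$ and $g_{k,i}$ through the unit-diffuse-variance projection $\chi_{k,1,i}$.
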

\begin{proof}
With $X=\widetilde P_k\beta_kG_k^{\rm self}$ and $Y=\beta_k\sum_{i\ne k}\widetilde P_iG_{k,i}^{\rm int}+\sigma_k^2$, define $Z=\ln X-\ln Y$. Convexity of $\log_2(1+e^Z)$ gives the first Jensen step, while $\mathbb E[\ln Y]\le\ln\mathbb E[Y]$ gives the second; together they yield
\begin{equation}
\mathbb E[\log_2(1+X/Y)]\ge\log_2\!\left(1+\frac{e^{\mathbb E[\ln X]}}{\mathbb E[Y]}\right),
\end{equation}
without any independence requirement. Substitution of Lemma~\ref{lem:rician}'s log moment and mean power yields \eqref{eq:thm_private}.
\end{proof}

\begin{theorem}[Reference-port ergodic common-rate lower bound]\label{thm:common}
Define the supported ergodic RSMA common rate as
\begin{equation}
\bar R_c \triangleq \min_{1\le k\le K}\mathbb E\!\left[\log_2(1+\gamma^c_{k,1})\right],
\label{eq:ergodic_common_def}
\end{equation}
i.e., the common codeword is transmitted across fading states at a rate decodable by every scheduled user. Under the reference-port setting,
\begin{align}
D_k^{c}&\triangleq \beta_k\sum_{i=1}^{K}\tilde P_i
\frac{\kappa_c g_{k,i}+1}{\kappa_c+1}+\sigma_k^2,\nonumber\\
R_k^{c,\mathrm{Lb}}&=\log_2\!\left(1+
\frac{\tilde P_c\beta_k e^{S(\kappa_c g_{c,k})}}
{(\kappa_c+1)D_k^c}\right),\label{eq:thm_common}\\
\bar R_c&\ge \min_{1\le k\le K}R_k^{c,\mathrm{Lb}}.\label{eq:thm_common_min}
\end{align}
\end{theorem}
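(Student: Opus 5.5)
The plan is to reuse the two-step Jensen argument from the proof of Theorem~\ref{thm:private}, applied per user at the reference port $p=1$, and then pass the bound through the minimum in \eqref{eq:ergodic_common_def}. Fix a scheduled user $k$ and write the common SINR \eqref{eq:gamma_common} at $p=1$ as $X/Y$ with
\begin{equation}
X=\widetilde P_c\beta_k|\chi_{k,1,c}|^2,\qquad
Y=\beta_k\sum_{i=1}^{K}\widetilde P_i|\chi_{k,1,i}|^2+\sigma_k^2 .
\end{equation}
The only structural difference from the private case is that the common-stream interference sum runs over all $K$ private streams, since SIC has not yet removed any of them. $X$ and $Y$ are again built from the same scattered vector $\widetilde{\mathbf h}_{k,1}$, so they are statistically dependent. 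The argument must therefore avoid any independence assumption.

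\textbf{Jensen steps.} With $Z=\ln X-\ln Y$, the map $z\mapsto\log_2(1+e^z)$ is convex, so
\begin{equation}
\mathbb E[\log_2(1+X/Y)]\ge\log_2\!\left(1+e^{\mathbb E[\ln X]-\mathbb E[\ln Y]}\right).
\end{equation}
Concavity of $\ln$ gives $\mathbb E[\ln Y]\le\ln\mathbb E[Y]$, and the right-hand side above is increasing in $-\mathbb E[\ln Y]$. Combining the two yields $\mathbb E[R_{c,k}]\ge\log_2(1+e^{\mathbb E[\ln X]}/\mathbb E[Y])$. Both steps use only marginal log and first moments, so the dependence between $X$ and $Y$ is irrelevant. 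One technical point must be checked: $\mathbb E[\ln X]$ has to be finite, and it is, because $|\chi_{k,1,c}|^2$ is a scaled noncentral $\chi^2_2$ variable.

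\textbf{Moment substitution.} The precoders are deterministic functions of the angles, so each realized direction $\bar{\mathbf w}_i$ is a deterministic unit-norm vector. Lemma~\ref{lem:rician} therefore applies with $\bar{\mathbf a}=\mathbf a_t(\theta_k)$. By \eqref{eq:equiv_full_projection}, $\chi_{k,1,i}$ has LoS mean $\sqrt{\kappa_c/(\kappa_c+1)}M_{k,i}$ with $|M_{k,i}|^2=g_{k,i}$ from Lemma~\ref{lem:eta}, and diffuse variance $1/(\kappa_c+1)$. The port-1 diagonal entry of $\mathbf R_F$ equals $J_0(0)=1$, so the diffuse variance is unchanged. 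The lemma then gives:
\begin{itemize}
\item $\mathbb E|\chi_{k,1,i}|^2=(\kappa_c g_{k,i}+1)/(\kappa_c+1)$, so $\mathbb E[Y]=D_k^c$;
\item $\mathbb E[\ln|\chi_{k,1,c}|^2]=-\ln(\kappa_c+1)+S(\kappa_c g_{c,k})$, so $e^{\mathbb E[\ln X]}=\widetilde P_c\beta_k e^{S(\kappa_c g_{c,k})}/(\kappa_c+1)$.
\end{itemize}
Substituting these into the Jensen bound gives $\mathbb E[R_{c,k}]\ge R_k^{c,\rm Lb}$ for each $k$.

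\textbf{Minimum over users.} Since $\mathbb E[R_{c,k}]\ge R_k^{c,\rm Lb}$ holds for every $k$, monotonicity of the minimum yields \eqref{eq:thm_common_min}. The main point needing care is that the $\widetilde P_i$ and the gains $g$ are deterministic given the scheduled and target angles, so the only randomness is the Rician fading. The ergodic expectation is therefore conditional on the geometry, and Lemma~\ref{lem:rician}'s requirement of a deterministic $\mathbf w$ is met.
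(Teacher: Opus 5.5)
Your proposal is correct and follows the paper's proof: the two-step Jensen bound of Theorem~\ref{thm:private} is applied per user to the port-1 common SINR, with the interference sum running over all $K$ private streams, and Lemma~\ref{lem:rician}'s moments are substituted. The per-user bounds are then passed through the outer minimum in \eqref{eq:ergodic_common_def}. One small slip: Lemma~\ref{lem:rician} should be applied to the scalar $\chi_{k,1,i}\sim\mathcal{CN}\big(\sqrt{\kappa_c/(\kappa_c+1)}\,M_{k,i},\,1/(\kappa_c+1)\big)$, equivalently to $\mathbf a_{2\rm D}(\theta_k)$ and $\bar{\mathbf w}_i^{2\rm D}$, not with $\bar{\mathbf a}=\mathbf a_t(\theta_k)$, which would give $g_{k,i}/N_y$. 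Your next sentence already uses $|M_{k,i}|^2=g_{k,i}$, so the conclusion is unaffected.
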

\begin{proof}
Apply the Jensen argument used in Theorem~\ref{thm:private} separately to each user's common-decoding SINR. Since \eqref{eq:ergodic_common_def} takes the minimum after the per-user ergodic expectations, the minimum of valid per-user lower bounds remains a valid lower bound.
\end{proof}

\begin{theorem}[FAS-enabled ergodic private rate lower bound]\label{thm:fas_private}
Let $\rho_{k,i}\triangleq(\bar{\mathbf w}_i)^H\bar{\mathbf w}_k$ be the (complex, exactly computable) post-projection precoder cross-correlation, and $M_{k,i}\triangleq\sqrt{N_y}\,\mathbf{a}_t^H(\theta_k)\bar{\mathbf w}_i$ the full-aperture LoS projection underlying $g_{k,i}=|M_{k,i}|^2$ (Lemma~\ref{lem:eta}). Define the self-gain-selection moments
\begin{equation}
A_k^\star\triangleq U_{p_k^\star},\qquad \mu_{A,k} \triangleq \mathbb{E}[A_k^\star], \qquad \nu_{A,k} \triangleq \mathbb{E}\big[|A_k^\star|^2\big],
\label{eq:muA_sigmaA}
\end{equation}
the first and second moments of the scattered-channel self-projection $U_p$ (Lemma~\ref{lem:fas}) evaluated at the FAS-selected port $p_k^\star$ \eqref{eq:fas_port} -- functions of $(\kappa_c,g_{k,k},\mathbf R_F)$ \emph{only}, independent of any interferer $i$, hence computed once per user $k$ and reused across all $i$. Then, under the angular precoder design and the same Jensen-bound recipe as Theorem~\ref{thm:private}, requiring no independence between numerator and denominator,
\begin{align}
\bar R_k^{p,\mathrm{FAS}}
&\ge\log_2\!\left(1+
\frac{\tilde P_k\beta_k e^{\mathbb E[\ln G_{k,\max}]}}
{\beta_k\sum_{i\ne k}\tilde P_i\Xi_{k,i}+\sigma_k^2}\right),\nonumber\\
\Xi_{k,i}&\triangleq\frac{\kappa_c g_{k,i}+1}{\kappa_c+1}
+\frac{2\sqrt{\kappa_c}}{\kappa_c+1}
\Re\{M_{k,i}^*\rho_{k,i}^*\mu_{A,k}\}\nonumber\\
&\quad+\frac{|\rho_{k,i}|^2(\nu_{A,k}-1)}{\kappa_c+1}.
\label{eq:thm4}
\end{align}
with $\mathbb{E}[\ln G_{k,\max}]$ from Lemma~\ref{lem:fas}. $\Xi_{k,i}$ reduces exactly to Lemma~\ref{lem:rician}'s port-invariant $(\kappa_c g_{k,i}+1)/(\kappa_c+1)$ when $\rho_{k,i}=0$.
\end{theorem}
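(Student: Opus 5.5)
The plan is to reuse the two-step Jensen argument of Theorem~\ref{thm:private} verbatim and to move all the FAS-specific work into computing the mean interference power at the selected port. Set $X=\widetilde P_k\beta_kG_{k,\max}$ and $Y=\beta_k\sum_{i\ne k}\widetilde P_i|\chi_{k,p_k^\star,i}|^2+\sigma_k^2$. Convexity of $\log_2(1+e^Z)$ in $Z=\ln X-\ln Y$, followed by $\mathbb E[\ln Y]\le\ln\mathbb E[Y]$, gives
\[
\mathbb E[\log_2(1+X/Y)]\ge\log_2\!\left(1+e^{\mathbb E[\ln X]}/\mathbb E[Y]\right).
\]
This needs no independence between $X$ and $Y$. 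Lemma~\ref{lem:fas} supplies $\mathbb E[\ln G_{k,\max}]$. It then remains to show $\mathbb E|\chi_{k,p_k^\star,i}|^2=\Xi_{k,i}$ for each $i\ne k$.

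The key step is an orthogonal decomposition of the interfering realized direction relative to the desired one. Write $\bar{\mathbf w}_i=(\bar{\mathbf w}_k^H\bar{\mathbf w}_i)\bar{\mathbf w}_k+\mathbf w_{i,\perp}=\rho_{k,i}^*\bar{\mathbf w}_k+\mathbf w_{i,\perp}$, where $\mathbf w_{i,\perp}\perp\bar{\mathbf w}_k$ and $\|\mathbf w_{i,\perp}\|^2=1-|\rho_{k,i}|^2$. The diffuse projection of stream $i$ at port $p$ then splits as $U_{k,p,i}=\rho_{k,i}^*U_{k,p,k}+V_{k,p,i}$, with $V_{k,p,i}=\widetilde{\mathbf h}_{k,p}^H\mathbf w_{i,\perp}$ (with the $\sqrt{N_y}$/separable normalization absorbed as in \eqref{eq:equiv_full_projection}).

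The same argument as in Lemma~\ref{lem:fas} (independence over transmit index $n$ and rotational invariance) shows that $(\mathbf U_{k,\cdot,k},\mathbf V_{k,\cdot,i})$ is jointly circular Gaussian with cross-covariance proportional to $\bar{\mathbf w}_k^H\mathbf w_{i,\perp}=0$. Hence $\mathbf V\sim\mathcal{CN}(\mathbf 0,(1-|\rho_{k,i}|^2)\mathbf R_F)$ is independent of $\mathbf U_{k,\cdot,k}$. The selection rule \eqref{eq:fas_port} is a function of $\mathbf U_{k,\cdot,k}$ only, because the LoS part is port-invariant. Conditioning on $\mathbf U_{k,\cdot,k}$ therefore gives $\mathbb E[V_{p_k^\star}\mid\mathbf U]=0$ and $\mathbb E[|V_{p_k^\star}|^2\mid\mathbf U]=1-|\rho_{k,i}|^2$. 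This is exactly how the selection coupling of Remark~\ref{rem:fas_coupling} is captured.

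Next substitute into \eqref{eq:equiv_full_projection}:
\[
\chi_{k,p_k^\star,i}=\sqrt{\tfrac{\kappa_c}{\kappa_c+1}}M_{k,i}+\tfrac{1}{\sqrt{\kappa_c+1}}\big(\rho_{k,i}^*A_k^\star+V_{p_k^\star}\big).
\]
Expand $|\cdot|^2$ and take expectations. The LoS term gives $\kappa_cg_{k,i}/(\kappa_c+1)$. The diffuse term gives $(|\rho_{k,i}|^2\nu_{A,k}+1-|\rho_{k,i}|^2)/(\kappa_c+1)$. The LoS--diffuse cross term gives $\tfrac{2\sqrt{\kappa_c}}{\kappa_c+1}\Re\{M_{k,i}^*\rho_{k,i}^*\mu_{A,k}\}$, since the $V$ cross terms vanish. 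Summing these reproduces $\Xi_{k,i}$, and setting $\rho_{k,i}=0$ gives the stated reduction.

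The main obstacle is bookkeeping rather than probability. First, the phase convention must be consistent. Lemma~\ref{lem:fas} absorbs the phase of $M_{k,k}$ to make $L_k$ real, so $\mu_{A,k}$ is defined in that rotated frame. I would therefore either keep $M_{k,k}$ complex in the definition of $\mu_{A,k}$, or rotate $M_{k,i}$ by the same phase, so that the product $M_{k,i}^*\rho_{k,i}^*\mu_{A,k}$ is frame-invariant. I would check this first. Second, I would verify that the conjugation in $\rho_{k,i}$ matches the $\mathbf h^H\mathbf w$ inner-product convention, and that the $N_y$ scaling keeps the diffuse variance at unity as asserted after \eqref{eq:equiv_full_projection}.
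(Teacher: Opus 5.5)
Your proposal is correct and follows the paper's proof: it uses the same two-step Jensen bound, and it obtains the selected-port interference mean from the regression decomposition $Y_p=\rho_{k,i}^*U_p+\sqrt{1-|\rho_{k,i}|^2}\,W_p$ (your $V_{k,p,i}$, derived explicitly from $\bar{\mathbf w}_i=\rho_{k,i}^*\bar{\mathbf w}_k+\mathbf w_{i,\perp}$) together with the fact that $p_k^\star$ depends only on the self-projection vector. Your phase-frame check is warranted, because absorbing the phase of $M_{k,k}$ leaves $G_{k,\max}$ and $\nu_{A,k}$ unchanged but rotates $\mu_{A,k}$; the cross term $\Re\{M_{k,i}^*\rho_{k,i}^*\mu_{A,k}\}$ therefore needs $\mu_{A,k}$ and $M_{k,i}$ in the same frame, a point the paper's proof leaves implicit.
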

\begin{proof}
For the selected self-beam projection $U_p$ and any mapped interferer, joint Gaussianity gives the regression decomposition $Y_p=\rho_{k,i}^*U_p+\sqrt{1-|\rho_{k,i}|^2}W_p$, where $\mathbf W=[W_1,\ldots,W_P]^T\sim\mathcal{CN}(\mathbf0,\mathbf R_F)$ is an independent copy of $\mathbf U_k$ and is independent of the complete self-projection vector $\mathbf U_k$. Since $p_k^\star$ depends only on $\{U_p\}$, $W_{p_k^\star}$ remains zero mean with unit second moment. Hence
\begin{equation}
\mathbb E[Y_{p_k^\star}]=\rho_{k,i}^*\mu_{A,k},\qquad \mathbb E|Y_{p_k^\star}|^2=1+|\rho_{k,i}|^2(\nu_{A,k}-1).
\end{equation}
Expanding the Rician interference projection gives $\Xi_{k,i}$ in \eqref{eq:thm4}. Applying the Jensen step of Theorem~\ref{thm:private} with Lemma~\ref{lem:fas}'s selected-self log moment proves the bound. The moments $\mu_{A,k},\nu_{A,k}$ are the full-correlation selected-port moments defined in Lemma~\ref{lem:fas} and are evaluated from the same $\mathbf R_F$-dependent Gaussian integral.
\end{proof}

\begin{lemma}[Selected-port cross-beam log moment]\label{lem:fas_crosslog}
Let $A_k^\star$ be the selected scattered self-beam projection defined in Theorem~\ref{thm:fas_private}. For any deterministic beam $j$ (including the common beam), define $\rho_{k,j}=(\bar{\mathbf w}_j)^H\bar{\mathbf w}_k$ and $M_{k,j}=\sqrt{N_y}\,\mathbf a_t^H(\theta_k)\bar{\mathbf w}_j$. The selected scattered projection satisfies
\begin{equation}
Y_{k,j}^\star=\rho_{k,j}^*A_k^\star+\sqrt{1-|\rho_{k,j}|^2}W,\qquad W\sim\mathcal{CN}(0,1),
\end{equation}
with $W$ independent of the selection process. Therefore, conditioned on $A_k^\star=a$,
\begin{align}
H_{k,j}^\star|a&\sim\mathcal{CN}(m_{k,j}(a),v_{k,j}),\\
m_{k,j}(a)&=\sqrt{\frac{\kappa_c}{\kappa_c+1}}M_{k,j}+\frac{\rho_{k,j}^*a}{\sqrt{\kappa_c+1}},\\
v_{k,j}&=\frac{1-|\rho_{k,j}|^2}{\kappa_c+1}.
\end{align}
Thus
\begin{equation}
L_{k,j}\triangleq\mathbb E[\ln|H_{k,j}^\star|^2]=\mathbb E_{A_k^\star}\!\left[\ln v_{k,j}+S\!\left(\frac{|m_{k,j}(A_k^\star)|^2}{v_{k,j}}\right)\right],
\label{eq:fas_cross_log}
\end{equation}
with the continuous limit $\ln|m_{k,j}|^2$ for $v_{k,j}=0$. The outer expectation is evaluated using the $\mathbf R_F$-dependent selected-variable law in Lemma~\ref{lem:fas}.
\end{lemma}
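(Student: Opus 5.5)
I will reuse the Gaussian regression step from the proof of Theorem~\ref{thm:fas_private}, applied now to a beam $j$ that may be the common beam. Fix user $k$. For each port $p$, write $U_p=\widetilde{\mathbf h}_{k,p}^H\bar{\mathbf w}_k$ and $Y_{p,j}=\widetilde{\mathbf h}_{k,p}^H\bar{\mathbf w}_j$ for the scattered projections.

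\textbf{Step 1: joint law of the projected vectors.} The coefficients $\widetilde{\mathbf h}_{k,n}=\mathbf R_F^{1/2}\mathbf z_{k,n}$ are independent across transmit-aperture indices $n$, and both beams are deterministic with unit norm. It follows that the stacked vector $[\mathbf U_k^T,\mathbf Y_j^T]^T$ is zero-mean circular complex Gaussian with covariance $\begin{bmatrix}1&\rho\\ \rho^*&1\end{bmatrix}\otimes\mathbf R_F$, up to conjugation conventions; I will fix these conventions so that $\mathbb E[Y_{p,j}U_q^*]=\rho_{k,j}^*[\mathbf R_F]_{p,q}$.

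\textbf{Step 2: regression decomposition.} Because the covariance has Kronecker structure, I can define $\mathbf W=(\mathbf Y_j-\rho_{k,j}^*\mathbf U_k)/\sqrt{1-|\rho_{k,j}|^2}$ when $|\rho_{k,j}|<1$. This vector is Gaussian with covariance $\mathbf R_F$ and is uncorrelated with $\mathbf U_k$, hence independent of it. The selected index $p_k^\star$ in \eqref{eq:fas_port} is a measurable function of $\mathbf U_k$ alone. The deterministic LoS term $L_k$ is common to all ports, so it does not break this measurability.

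Conditioning on $\mathbf U_k$, the selected component $W_{p_k^\star}$ is a fixed coordinate of a vector that is independent of $\mathbf U_k$ and has unit diagonal variance. So $W\triangleq W_{p_k^\star}$ is $\mathcal{CN}(0,1)$ conditionally, and therefore also unconditionally and independently of $(\mathbf U_k,p_k^\star)$. This gives the stated representation of $Y_{k,j}^\star$. The degenerate case $|\rho_{k,j}|=1$ is handled by setting the second term to zero.

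\textbf{Step 3: conditional law and log moment.} The selected-port channel is $H_{k,j}^\star=\sqrt{\kappa_c/(\kappa_c+1)}\,M_{k,j}+Y_{k,j}^\star/\sqrt{\kappa_c+1}$. Conditioned on $A_k^\star=a$, this is a deterministic mean plus $\sqrt{v_{k,j}}\,W$, which gives the stated $m_{k,j}(a)$ and $v_{k,j}$.

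Then $|H_{k,j}^\star|^2/v_{k,j}$ is noncentral chi-square with two degrees of freedom and noncentrality $|m|^2/v$. Applying the log-moment formula of Lemma~\ref{lem:rician} (with $\kappa=0$ scaling, i.e.\ the Poisson-mixture identity $\mathbb E\ln|\mu+X|^2=S(|\mu|^2)$ for $X\sim\mathcal{CN}(0,1)$) yields $\ln v+S(|m|^2/v)$. The tower property then gives \eqref{eq:fas_cross_log}, with the outer law of $A_k^\star$ taken from Lemma~\ref{lem:fas}.

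\textbf{Step 4: the limit $v\to0$.} The claimed continuous limit follows from $S(\lambda_0)=\ln\lambda_0+O(1/\lambda_0)$ as $\lambda_0\to\infty$, which gives $\ln v+S(|m|^2/v)\to\ln|m|^2$. To pass the limit inside the expectation over $A_k^\star$, I would invoke dominated convergence; the relevant tail bound holds since $A_k^\star$ has Gaussian-type tails.

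\textbf{Main obstacle.} The delicate point is Step 2. The argument needs the cross-covariance between the two projected port vectors to be exactly $\rho^*\mathbf R_F$, not merely equal at matching ports, because only then does the residual $\mathbf W$ decouple from the entire vector $\mathbf U_k$ that drives the selection. I would verify this entrywise from the independence of $\{\mathbf z_{k,n}\}$ over $n$, computing $\mathbb E[Y_{p}U_q^*]=\sum_n[\bar{\mathbf w}_j]_n[\bar{\mathbf w}_k]_n^*\,[\mathbf R_F]_{p,q}$.
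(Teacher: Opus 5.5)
Your proposal is correct and follows the paper's route. The paper uses the vector regression identity $\mathbf Y_{k,j}=\rho_{k,j}^*\mathbf U_k+\sqrt{1-|\rho_{k,j}|^2}\,\mathbf W$, with $\mathbf W$ independent of the entire self-projection vector (proof of Theorem~\ref{thm:fas_private} and Remark~\ref{rem:fullcorr_structure}); it then uses measurability of $p_k^\star$ with respect to $\mathbf U_k$, the conditional complex-Gaussian law, and the Poisson-mixture log moment of Lemma~\ref{lem:rician} with the tower property. Your entrywise check that the cross-covariance is exactly $\rho_{k,j}^*\mathbf R_F$ fills in a step the paper only asserts (strictly it gives $[\mathbf R_F]_{q,p}$, equal to $[\mathbf R_F]_{p,q}$ because the Bessel matrix is real symmetric). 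The phrase ``with $\kappa=0$ scaling'' is misleading and should be replaced by the identity $\mathbb E\ln|\mu+X|^2=S(|\mu|^2)$ that you actually use.
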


\begin{remark}[Effect of the full Bessel correlation model]\label{rem:fullcorr_structure}
Replacing the one-factor approximation by the full matrix $\mathbf R_F$ changes the joint law of the port vector and therefore the numerical values of $\mathbb E[\ln G_{k,\max}]$, $\mu_{A,k}$, $\nu_{A,k}$, and $\mathcal L_{k,j}$. However, the algebraic forms of Theorem~\ref{thm:fas_private}, Lemma~\ref{lem:fas_crosslog}, and Theorem~\ref{thm:fas_common} remain unchanged. The reason is that for any other deterministic beam $i$, the full projected diffuse vectors satisfy the vector Gaussian regression identity
\begin{equation}
\mathbf Y_{k,i}=\rho_{k,i}^*\mathbf U_k+\sqrt{1-|\rho_{k,i}|^2}\,\mathbf W_{k,i},
\end{equation}
where $\mathbf W_{k,i}\sim\mathcal{CN}(\mathbf0,\mathbf R_F)$ is independent of $\mathbf U_k$. Since $p_k^\star$ is a function only of $\mathbf U_k$, the selected residual $W_{k,i,p_k^\star}$ remains zero mean with unit variance and independent of the selection process. Hence the selected-port interference mean and cross-beam log-moment derivations retain the same scalar forms once the full-correlation selected moments from Lemma~\ref{lem:fas} are substituted.
\end{remark}

\begin{theorem}[FAS-selected ergodic common-rate lower bound]\label{thm:fas_common}
Under the private-self-gain port-selection rule \eqref{eq:fas_port}, the same physical selected port is used to decode the common stream before SIC. Let $\Xi_{k,i}$ be the exact selected-port mean power in \eqref{eq:thm4}, extended to $i=k$ by the same formula, and let $\mathcal L_{k,c}$ be Lemma~\ref{lem:fas_crosslog}'s common-beam log moment. Then
\begin{align}
R_{k}^{c,\mathrm{FAS,Lb}}
&=\log_2\!\left(1+
\frac{\tilde P_c\beta_k e^{\mathcal L_{k,c}}}
{\beta_k\sum_{i=1}^{K}\tilde P_i\Xi_{k,i}+\sigma_k^2}\right),\label{eq:fas_common_lb}\\
\bar R_c&\ge\min_{1\le k\le K}R_{k}^{c,\mathrm{FAS,Lb}}.\label{eq:fas_common_min}
\end{align}
\end{theorem}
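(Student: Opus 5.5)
The plan is to reuse the two-step Jensen argument from Theorem~\ref{thm:private}, applied to the common-decoding SINR \eqref{eq:gamma_common} evaluated at the FAS-selected port $p_k^\star$ of \eqref{eq:fas_port}. Set
\begin{equation*}
X=\widetilde P_c\beta_k|H_{k,c}^\star|^2,\qquad
Y=\beta_k\sum_{i=1}^K\widetilde P_i|H_{k,i}^\star|^2+\sigma_k^2,
\end{equation*}
where $H_{k,j}^\star=\chi_{k,p_k^\star,j}$. Convexity of $\log_2(1+e^Z)$ in $Z=\ln X-\ln Y$, together with $\mathbb E[\ln Y]\le\ln\mathbb E[Y]$, gives
\begin{equation*}
\mathbb E[\log_2(1+X/Y)]\ge\log_2\!\left(1+\frac{e^{\mathbb E[\ln X]}}{\mathbb E[Y]}\right).
\end{equation*}
This holds without any independence between $X$ and $Y$, which matters here because both are generated by the same selected port and the same scattered vectors $\mathbf U_k$.

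The numerator follows directly: $\mathbb E[\ln X]=\ln(\widetilde P_c\beta_k)+\mathcal L_{k,c}$, with $\mathcal L_{k,c}$ taken from Lemma~\ref{lem:fas_crosslog} at $j=c$. That lemma already accounts for the dependence of the common-beam projection on the selection, through the regression on $A_k^\star$.

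For the denominator, $\mathbb E[Y]=\beta_k\sum_{i=1}^K\widetilde P_i\,\mathbb E|H_{k,i}^\star|^2+\sigma_k^2$.
\begin{itemize}
\item For $i\ne k$, $\mathbb E|H_{k,i}^\star|^2=\Xi_{k,i}$ exactly, as established in the proof of Theorem~\ref{thm:fas_private}. That proof expands $|\sqrt{\kappa_c/(\kappa_c+1)}M_{k,i}+Y_{p_k^\star}/\sqrt{\kappa_c+1}|^2$ and uses $\mathbb E[Y_{p_k^\star}]=\rho_{k,i}^*\mu_{A,k}$ and $\mathbb E|Y_{p_k^\star}|^2=1+|\rho_{k,i}|^2(\nu_{A,k}-1)$; Remark~\ref{rem:fullcorr_structure} covers the full-$\mathbf R_F$ case.
\item For $i=k$, where the private self-term now appears as interference, I extend the same formula. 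Since $\rho_{k,k}=1$, the regression is degenerate: $Y=U_{p_k^\star}=A_k^\star$ with no residual $W$. Plugging in gives $\Xi_{k,k}=(\kappa_cg_{k,k}+1)/(\kappa_c+1)+\frac{2\sqrt{\kappa_c}}{\kappa_c+1}\Re\{M_{k,k}^*\mu_{A,k}\}+(\nu_{A,k}-1)/(\kappa_c+1)$, which equals $\mathbb E|H_{k,k}^\star|^2=\mathbb E[G_{k,\max}]$ directly.
\end{itemize}
Substituting these means into the Jensen bound yields \eqref{eq:fas_common_lb} as a lower bound on $\mathbb E[\log_2(1+\gamma^c_{k,p_k^\star})]$.

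Finally, the supported common rate is the minimum over users of the per-user ergodic rates. Taking the minimum of valid per-user lower bounds therefore gives \eqref{eq:fas_common_min}, exactly as in Theorem~\ref{thm:common}.

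I expect the main obstacle to be the $i=k$ term. The concern is whether the phase convention for $M_{k,k}$, where $L_k=\sqrt{g_{k,k}}$ absorbs the phase in Lemma~\ref{lem:fas}, matches the one used inside $\Xi$, so that the formula truly equals $\mathbb E[G_{k,\max}]$. I would check this by writing $H_{k,k}^\star$ in the form \eqref{eq:fullcorr_port_projection} and verifying that the cross term uses the same rotated $A_k^\star$.

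A second point to verify is that the selected-port common projection shares the scattered vector with $\mathbf U_k$ via $\rho_{k,c}$, so that Lemma~\ref{lem:fas_crosslog} applies with the same $A_k^\star$.
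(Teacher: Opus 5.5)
Your proposal is correct and follows the paper's proof essentially step for step. Both use the two-step Jensen bound of Theorem~\ref{thm:private}, take the numerator log moment $\mathcal L_{k,c}$ from Lemma~\ref{lem:fas_crosslog}, obtain the selected-port regression means $\Xi_{k,i}$ for all $i$ (including the degenerate $\rho_{k,k}=1$ case), and finish with the minimum over per-user bounds. Your phase-convention worry is harmless provided $\mu_{A,k}$ and every $M_{k,i}$ are expressed in the same frame, since a common phase rotation cancels in $\Re\{M_{k,i}^*\rho_{k,i}^*\mu_{A,k}\}$.
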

\begin{proof}
The numerator log moment follows exactly from Lemma~\ref{lem:fas_crosslog}. The denominator mean follows from the regression decomposition already used in Theorem~\ref{thm:fas_private}; this includes the selected private self-stream ($i=k$) and all cross-private streams. Applying the Jensen argument of Theorem~\ref{thm:private} yields the per-user lower bound, and the ergodic common-rate definition \eqref{eq:ergodic_common_def} makes the final minimum rigorous.
\end{proof}

\subsection{Sensing Performance}

\begin{theorem}[Average bistatic sensing SNR]\label{thm:sensing}
Let the ground-BS receive beamformer use conventional maximum-ratio combining (MRC) toward the target's angle of arrival, $\mathbf{w}_r = \mathbf{a}_r(\phi_m)/\sqrt{N_r}$ (unconstrained; no holographic efficiency loss on the receive leg). Using the regularized short-leg distance $\hat r_{m,b^\star}=\sqrt{r_{\min}^2+d_b^2}$ and the angular common precoder of Section~\ref{sec:precoding},
\begin{align}
\bar\gamma^s&=\frac{\tilde P_c\zeta_m\beta_{S,m}G_p}{\sigma_b^2}
\frac{\kappa_t g_{c,m}+1}{\kappa_t+1}
\frac{\kappa_bN_r+1}{\kappa_b+1}\nonumber\\
&\quad\times\left(\frac{\lambda}{4\pi}\right)^2
\pi\lambda_b e^{z_b}E_1(z_b),\qquad
z_b\triangleq\pi\lambda_b r_{\min}^2.
\label{eq:thm_sensing}
\end{align}
where $g_{c,m}=N_y|\mathbf a_t(\theta_m)^H\bar{\mathbf w}_c|^2$ is the full two-dimensional shared-RHS target gain of Lemma~\ref{lem:eta}, $\tilde P_c=P_ca_c^2$ is the effective radiated common-stream power (Remark~\ref{rem:amplitude_power}), and $E_1(\cdot)$ is the exponential integral function.
\end{theorem}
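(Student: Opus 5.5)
The plan is to factor the expectation of \eqref{eq:sensing_sinr} into three independent pieces and evaluate each one separately. The first piece is the satellite-leg illumination gain $|\chi_{m,c}|^2$. The second is the short-leg receive gain $|\mathbf w_r^H\tilde{\mathbf h}'|^2$, where $\tilde{\mathbf h}'$ denotes the unit-large-scale Rician part of $\mathbf h_{m,b^\star}$. The third is the short-leg path loss $\beta_{m,b^\star}$, written with the regularized distance $\hat r_{m,b^\star}$.

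We condition on the target location, so $\beta_{S,m}$, $\theta_m$ and $\phi_m$ are deterministic. The precoder is a deterministic function of the angles, so $\bar{\mathbf w}_c$ and $g_{c,m}$ are deterministic as well. The diffuse components $\tilde{\mathbf g}_{S,m}$ and $\tilde{\mathbf h}_{m,b^\star}$ are independent of each other and of the BS process $\Phi_b$. The nearest-BS distance $d_b$ depends only on $\Phi_b$, and by stationarity its law does not depend on the target location. The three factors are therefore mutually independent, and the mean of the product equals the product of the means:
\[
\bar\gamma^s=\frac{\tilde P_c\zeta_m\beta_{S,m}G_p}{\sigma_b^2}\,\mathbb E|\chi_{m,c}|^2\,\mathbb E|\mathbf w_r^H\tilde{\mathbf h}'|^2\,\mathbb E[\beta_{m,b^\star}].
\]

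For the first two factors we invoke Lemma~\ref{lem:rician}. On the satellite leg, $\chi_{m,c}$ is by \eqref{eq:target_projection} a Rician projection whose LoS power is $|M_{m,c}|^2=g_{c,m}$ (this is Lemma~\ref{lem:eta}) and whose diffuse variance is one. This gives $\mathbb E|\chi_{m,c}|^2=(\kappa_t g_{c,m}+1)/(\kappa_t+1)$. On the receive leg, $\mathbf w_r=\mathbf a_r(\phi_m)/\sqrt{N_r}$ is deterministic and unit-norm, and $|\mathbf a_r^H(\phi_m)\mathbf w_r|^2=N_r$. Lemma~\ref{lem:rician} then gives $(\kappa_bN_r+1)/(\kappa_b+1)$. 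Both factors follow routinely.

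The main step is the spatial average $\mathbb E[\beta_{m,b^\star}]=(\lambda/4\pi)^2\,\mathbb E[(r_{\min}^2+d_b^2)^{-1}]$. For a planar PPP of density $\lambda_b$, the nearest-neighbor distance has density $f(d)=2\pi\lambda_b d\,e^{-\pi\lambda_b d^2}$. Substituting $u=\pi\lambda_b d^2$ gives
\[
\mathbb E\Big[\frac{1}{r_{\min}^2+d_b^2}\Big]=\int_0^\infty\frac{\pi\lambda_b e^{-u}}{u+z_b}\,du,
\]
with $z_b=\pi\lambda_b r_{\min}^2$. A shift $t=u+z_b$ turns this integral into $\pi\lambda_b e^{z_b}\int_{z_b}^\infty e^{-t}t^{-1}dt=\pi\lambda_b e^{z_b}E_1(z_b)$. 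The integral is finite because $r_{\min}>0$.

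Multiplying the three factors yields \eqref{eq:thm_sensing}. The only delicate point is the independence and stationarity justification that allows the product to be factored. In particular, the target's position within $\mathcal D$ must not bias $d_b$. This holds because $\Phi_b$ lives on the whole plane and is independent of $\Phi_r$. I expect the exponential-integral change of variables to be the sole nontrivial computation.
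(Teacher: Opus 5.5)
Your proposal is correct and follows the paper's proof essentially step for step. Both arguments use independence of the two Rician legs and the nearest-BS distance, then apply Lemma~\ref{lem:rician} to each leg (with $|M_{m,c}|^2=g_{c,m}$ and $|\mathbf a_r^H(\phi_m)\mathbf w_r|^2=N_r$), and finally evaluate the nearest-neighbor average $\mathbb E[\hat r^{-2}]=\pi\lambda_b e^{z_b}E_1(z_b)$; you spell out the substitutions $u=\pi\lambda_b d^2$, $t=u+z_b$ where the paper only states the result.

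One small imprecision: conditioning on the target location does not fix $\phi_m$, which depends on where the nearest BS lies. This is harmless, because the matched-MRC LoS gain equals $N_r$ for every $\phi_m$, so the receive-leg gain is independent of $\Phi_b$ anyway.
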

\begin{proof}
The transmit and receive Rician gains are independent, so their means multiply exactly. Lemma~\ref{lem:rician} gives $(\kappa_tg_{c,m}+1)/(\kappa_t+1)$ on the long leg and $(\kappa_bN_r+1)/(\kappa_b+1)$ under receive MRC. For the nearest BS, $d_b$ has density $2\pi\lambda_bd e^{-\pi\lambda_bd^2}$ and $\hat r^2=r_{\min}^2+d_b^2$, yielding
\begin{equation}
\mathbb E[\hat r^{-2}]=\pi\lambda_b e^{z_b}E_1(z_b),\qquad z_b=\pi\lambda_b r_{\min}^2.
\end{equation}
Combining these factors proves \eqref{eq:thm_sensing}.
\end{proof}

\subsection{Spatial Averaging over the Footprint}
\label{sec:spatial_avg}

Theorems~\ref{thm:private}--\ref{thm:sensing} are geometry-conditioned: they average over small-scale fading (and, for Theorem~\ref{thm:sensing}, over the nearest sensing-BS distance) for fixed satellite-to-ground slant ranges. We now average over the residual ground-position randomness induced by the finite satellite footprint. This averaging concerns the finite-window user/target processes $\Phi_u\cap\mathcal D$ and $\Phi_r\cap\mathcal D$; the sensing-BS process remains the planar PPP specified in Section~\ref{sec:system_model}.

A standard finite-window PPP property is useful here: conditioned on $N_u=|\Phi_u\cap\mathcal D|=n$, the $n$ user locations are i.i.d. uniform on $\mathcal D$ (and analogously for targets). Under the principal-plane array abstraction, the modeled departure angle depends only on the ground coordinate $x$ through $\theta=\arctan(x/h_s)$, while the orthogonal coordinate $y$ affects the slant range but not the one-dimensional steering response. Consequently, once a scheduled angle is fixed, $x=h_s\tan\theta$ is fixed and only the admissible $y$ coordinate remains random. Because the scheduler in \eqref{eq:scheduling} uses only the candidate angles, conditioning on the event that a particular $x$-coordinate is selected does not alter the conditional law of its associated $y$ coordinate.

\begin{lemma}[Unconditional footprint average of the free-space path loss]\label{lem:beta_spatial}
Let a candidate point be drawn from the finite-window PPP conditional location law in $\mathcal D$. Its horizontal radius $\rho=\sqrt{x^2+y^2}$ has density
\begin{equation}
f_\rho(\rho)=\frac{2\rho}{r_c^2},\qquad 0\le \rho\le r_c.
\end{equation}
For $\beta(\rho)=(\lambda/4\pi)^2(h_s^2+\rho^2)^{-1}$,
\begin{equation}
\mathbb E[\beta(\rho)]
=\left(\frac{\lambda}{4\pi}\right)^2\frac{1}{r_c^2}
\ln\!\left(1+\frac{r_c^2}{h_s^2}\right).
\label{eq:Ebeta}
\end{equation}
This is the average for an unscheduled candidate drawn uniformly from the footprint; it is not the conditional path-loss law after an angle has been selected.
\end{lemma}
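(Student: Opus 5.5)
The plan is to compute the distribution of the horizontal radius directly from the uniform location law on the disk, and then evaluate the path-loss expectation as a one-dimensional integral in closed form.

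\textbf{Step 1: radial law.} By the finite-window PPP property recalled above, conditioned on $N_u=n$ each candidate location is uniform on $\mathcal D$, with density $1/(\pi r_c^2)$ on the disk. Hence, for $0\le\rho\le r_c$,
\[
\Pr[\rho\le r]=\frac{\pi r^2}{\pi r_c^2}=\frac{r^2}{r_c^2}.
\]
Differentiating gives $f_\rho(\rho)=2\rho/r_c^2$. Equivalently, one can pass to polar coordinates, where the Jacobian is $\rho$, and integrate out the uniformly distributed azimuth.

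\textbf{Step 2: the integral.} Next, write
\[
\mathbb E[\beta(\rho)]=\left(\frac{\lambda}{4\pi}\right)^2\frac{1}{r_c^2}\int_0^{r_c}\frac{2\rho}{h_s^2+\rho^2}\,d\rho .
\]
Substituting $u=h_s^2+\rho^2$, so that $du=2\rho\,d\rho$, the integral becomes $\ln\!\big((h_s^2+r_c^2)/h_s^2\big)=\ln(1+r_c^2/h_s^2)$. This is exactly \eqref{eq:Ebeta}.

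\textbf{Integrability and scope.} The integrand is bounded by $2\rho/h_s^2$ because $h_s>0$, so the expectation is finite and no regularization is needed. This contrasts with the short leg in Theorem~\ref{thm:sensing}, where the guard distance $r_{\min}$ was required. Since the result does not depend on $n$, averaging over $N_u$ (on the event that it is nonzero) leaves it unchanged.

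\textbf{Anticipated difficulty.} There is essentially no computational obstacle. The only delicate point is the interpretive caveat at the end of the statement. The derivation uses the unconditional uniform law of a single candidate. After the angular scheduler \eqref{eq:scheduling} selects a point, its $x$-coordinate is fixed through $\theta=\arctan(x/h_s)$. The conditional law of $\rho$ then concentrates on the chord at that $x$ rather than following $2\rho/r_c^2$. I would state this explicitly so that the lemma is not misused for scheduled users.
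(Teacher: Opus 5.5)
Your proposal is correct and follows the paper's proof. Uniformity of the count-conditioned finite-window PPP on $\mathcal D$ gives $f_\rho(\rho)=2\rho/r_c^2$, and direct integration of $\beta(\rho)f_\rho(\rho)$ over $[0,r_c]$ (your substitution $u=h_s^2+\rho^2$) yields \eqref{eq:Ebeta}; your remarks on finiteness and on scheduled users are consistent with the paper's conditional-angle lemma that follows.
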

\begin{proof}
Conditioned on the finite-window point count, the point is uniform on $\mathcal D$, hence $f_\rho(\rho)=2\rho/r_c^2$. Direct integration of $\beta(\rho)f_\rho(\rho)$ over $[0,r_c]$ gives \eqref{eq:Ebeta}.
\end{proof}

\begin{lemma}[Footprint average conditional on a scheduled angle]\label{lem:beta_conditional}
Fix an admissible scheduled angle $\theta$, so that $x=h_s\tan\theta$ and $|x|<r_c$. Define
\begin{equation}
Y(x)\triangleq\sqrt{r_c^2-x^2},\qquad
A(x)\triangleq\sqrt{h_s^2+x^2}.
\end{equation}
For a point uniformly distributed on $\mathcal D$,
\begin{equation}
y\mid x\sim\mathrm{Uniform}[-Y(x),Y(x)].
\label{eq:y_given_x}
\end{equation}
Moreover, \eqref{eq:y_given_x} remains valid for a user selected by the scheduler in \eqref{eq:scheduling}, because that scheduler is measurable with respect to the candidate $x$-coordinates (equivalently, angles) only. Therefore,
\begin{equation}
\mathbb E[\beta(\rho)\mid\theta]
=\left(\frac{\lambda}{4\pi}\right)^2
\frac{\arctan\!\big(Y(x)/A(x)\big)}{Y(x)A(x)}.
\label{eq:Ebeta_cond}
\end{equation}
The same expression applies to the representative target after replacing $\theta$ by $\theta_m$.
\end{lemma}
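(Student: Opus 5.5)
The proof has three parts: establish the conditional law \eqref{eq:y_given_x}, argue that scheduling does not change it, and evaluate one elementary integral.

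\emph{Conditional law.} A point uniform on $\mathcal D$ has joint density $1/(\pi r_c^2)$ on the disk. Integrating out $y$ gives the marginal $f_x(x)=2Y(x)/(\pi r_c^2)$ for $|x|<r_c$. Dividing the joint density by this marginal gives the conditional density $1/(2Y(x))$ on $[-Y(x),Y(x)]$, which is \eqref{eq:y_given_x}. Because $\theta=\arctan(x/h_s)$ is a strictly monotone bijection, conditioning on $\theta$ is the same as conditioning on $x=h_s\tan\theta$.

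\emph{Invariance under scheduling.} This is the only step that needs care. Condition on $N_u=n$; by the finite-window property the candidate locations $(x_j,y_j)_{j=1}^n$ are i.i.d. uniform on $\mathcal D$. The scheduler \eqref{eq:scheduling} uses only the candidate angles and the target angle, so the selected index $J$ is a measurable function of $(x_1,\ldots,x_n)$ and the independent target location. I would show that, conditional on $J=j$ and on all $x$-coordinates, $y_j$ still has density $1/(2Y(x_j))$. This holds because the pairs are independent: given $(x_1,\ldots,x_n)$, the $y_j$ are conditionally independent with the laws above, and the event $\{J=j\}$ lies in the $\sigma$-field generated by the $x$'s. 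Averaging over the other coordinates and over $n$ then gives \eqref{eq:y_given_x} for the scheduled user whose angle is $\theta$. The tie-breaking rule in the argmin/argmax should also depend only on the angles (or on independent randomization) so that measurability is preserved; I would state this explicitly. The target case is the same argument without any selection step.

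\emph{Integral.} We have $h_s^2+\rho^2=A(x)^2+y^2$. Using symmetry in $y$,
\[
\mathbb E[\beta\mid\theta]=\Big(\frac{\lambda}{4\pi}\Big)^2\frac{1}{2Y}\int_{-Y}^{Y}\frac{dy}{A^2+y^2}=\Big(\frac{\lambda}{4\pi}\Big)^2\frac{1}{Y}\cdot\frac{1}{A}\arctan\!\frac{Y}{A},
\]
which is \eqref{eq:Ebeta_cond}. As a sanity check I would note that letting $Y\to0$ at the footprint edge gives $1/A^2$, the deterministic slant-range value.
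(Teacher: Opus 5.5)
Your proposal is correct and follows the same three-step route as the paper's proof: the constant joint density gives the uniform chord law, the conditional $y$-law is unchanged because the scheduler depends only on the angles, and the elementary $\arctan$ integral gives the result. Your treatment of the scheduling step (conditional independence of the $y_j$ given all $x$-coordinates, measurability of the selection event including the independent target angle, and angle-only tie-breaking) spells out what the paper states in a single sentence.
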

\begin{proof}
The joint conditional location density in the disk is constant. At fixed $x$, its support is the chord $[-Y(x),Y(x)]$, proving \eqref{eq:y_given_x}. The scheduler depends only on the set of $x$-coordinates, so conditioning additionally on the corresponding selection event does not reweight $y$ once $x$ is fixed. Finally,
\[
\frac{1}{2Y}\int_{-Y}^{Y}\frac{dy}{h_s^2+x^2+y^2}
=\frac{\arctan(Y/A)}{YA},
\]
which gives \eqref{eq:Ebeta_cond}.
\end{proof}

\begin{remark}[Mean-path-loss substitution is not a conservative rate operation]\label{rem:spatial_naive}
For fixed fading-averaged beam quantities, let $\ell>0$ denote a generic path-loss gain (introduced here only to avoid overloading the RSMA power fraction $\beta$). Each per-user communication bound in Theorems~\ref{thm:private}--\ref{thm:fas_common} can be written as
\begin{equation}
f(\ell)=\log_2\!\left(1+\frac{A\ell}{D\ell+\sigma^2}\right),
\qquad A>0,\;D\ge0,\;\sigma^2>0.
\end{equation}
Equivalently,
$f(\ell)=\log_2(\sigma^2+(A+D)\ell)-\log_2(\sigma^2+D\ell)$, and direct differentiation gives
\begin{align}
f'(\ell)
&=\frac{A\sigma^2}
{\ln2\,[\sigma^2+(A+D)\ell][\sigma^2+D\ell]}>0,\\
f''(\ell)
&=-\frac{A\sigma^2}{\ln2}
\left[
\frac{A+D}{[\sigma^2+(A+D)\ell]^2[\sigma^2+D\ell]}
\right.\nonumber\\[-0.2em]
&\hspace{7em}\left.
+\frac{D}{[\sigma^2+(A+D)\ell][\sigma^2+D\ell]^2}
\right]<0.
\label{eq:spatial_concavity}
\end{align}
Thus $f$ is increasing and concave, including the interference-free special case $D=0$. Jensen's inequality therefore yields
\begin{equation}
\mathbb E[f(\ell)\mid\theta]\le f\!\left(\mathbb E[\ell\mid\theta]\right).
\end{equation}
Hence inserting \eqref{eq:Ebeta_cond} directly into the fixed-geometry rate bound is generally an \emph{upper approximation to the spatial average of that bound}, not a conservative lower bound on the spatially averaged rate.
\end{remark}

\begin{corollary}[Valid spatially averaged communication lower bounds]\label{cor:spatial_comm}
Because each fixed-$\beta_k$ expression $R_k^{\mathrm{Lb}}(\beta_k)$ is a lower bound on the corresponding fading-averaged rate, averaging the bound itself preserves the inequality:
\begin{align}
\bar R_k^{\mathrm{sp,Lb}}
&\triangleq \mathbb E_{y\mid\theta_k}
\!\left[R_k^{\mathrm{Lb}}\big(\beta_k(x_k,y)\big)\right]\\
&\le \mathbb E_{y\mid\theta_k}
\!\left[\mathbb E_{\rm fading}[R_k\mid x_k,y]\right],
\qquad x_k=h_s\tan\theta_k.
\label{eq:cor_spatial}
\end{align}
The expectation is a one-dimensional integral over the uniform chord in \eqref{eq:y_given_x} and is evaluated by Gauss--Legendre quadrature. For the common stream, a valid supported spatial lower bound is
\begin{equation}
\bar R_c^{\mathrm{sp,Lb}}
=\min_k\;\mathbb E_{y\mid\theta_k}
\!\left[R_{c,k}^{\mathrm{Lb}}\big(\beta_k(x_k,y)\big)\right],
\label{eq:spatial_common}
\end{equation}
with either the reference-port or FAS-selected per-user bound used consistently.
\end{corollary}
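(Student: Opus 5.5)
The plan is to establish the corollary by a pointwise-then-average argument that uses only the geometry-conditioned theorems and the conditional location law of Lemma~\ref{lem:beta_conditional}.

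\textbf{Pointwise bound.} Fix a scheduled angle $\theta_k$, hence $x_k=h_s\tan\theta_k$. For every admissible $y\in[-Y(x_k),Y(x_k)]$, the path-loss gain $\beta_k(x_k,y)=(\lambda/4\pi)^2(h_s^2+x_k^2+y^2)^{-1}$ is deterministic. The deterministic beam quantities $g_{k,i}$, $\rho_{k,i}$, $M_{k,i}$, $\widetilde P_i$ and the selected-port moments depend only on the angles, not on $y$. Theorems~\ref{thm:private}--\ref{thm:fas_common} therefore apply verbatim at this fixed geometry and give
\begin{equation*}
R_k^{\mathrm{Lb}}(\beta_k(x_k,y))\le \mathbb E_{\rm fading}[R_k\mid x_k,y].
\end{equation*}
Here $R_k$ is either the private rate or the per-user common-decoding rate, with the reference-port or FAS version used consistently.

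\textbf{Averaging over the chord.} Next I take the expectation of both sides with respect to $y\mid\theta_k$. By Lemma~\ref{lem:beta_conditional}, this conditional law is uniform on the chord even after scheduling, because the scheduler is measurable with respect to the angles only. Monotonicity of expectation preserves a pointwise inequality, which yields \eqref{eq:cor_spatial}. Both sides are bounded by finite $y$-integrals: $\beta_k$ lies between positive constants on the chord, and the fading-averaged rate is finite. The expectations are therefore well defined, and no Jensen step in $\ell$ is needed. This is exactly why the corollary avoids the non-conservative substitution identified in Remark~\ref{rem:spatial_naive}.

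\textbf{Common stream.} Define the spatially averaged supported common rate as $\min_k$ of the spatially averaged per-user ergodic common rates, mirroring \eqref{eq:ergodic_common_def}. Each per-user term is lower-bounded by its averaged bound from the previous step. The minimum is monotone in each argument, so the minimum of the lower bounds is at most the minimum of the true quantities, which gives \eqref{eq:spatial_common}.

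\textbf{Main obstacle.} The only delicate point is justifying that conditioning on the scheduling event does not reweight $y$, and that the selected-port moments do not depend on $y$. The first is already supplied by Lemma~\ref{lem:beta_conditional}. For the second, I will check that $\mathbb E[\ln G_{k,\max}]$, $\mu_{A,k}$, $\nu_{A,k}$ and $\mathcal L_{k,c}$ are functions of $(\kappa_c,g_{k,k},\mathbf R_F)$ only, as stated in Lemma~\ref{lem:fas}. This holds because $\beta_k$ enters the SINRs as a scalar multiplying the signal and interference terms and does not enter port selection. Finally, Gauss--Legendre quadrature is just a numerical evaluation of this one-dimensional integral and plays no role in the validity of the bound.
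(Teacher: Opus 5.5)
Your proposal is correct and takes the same route as the paper, whose justification is the one-line argument embedded in the statement: the fixed-geometry bounds hold pointwise in $y$ because all beam quantities and selected-port moments are angle-determined (and $\beta_k$ does not enter port selection), averaging over the uniform chord of Lemma~\ref{lem:beta_conditional} preserves the inequality by monotonicity of expectation, and the common-stream case follows from monotonicity of the minimum. Two small remarks: $\mathcal L_{k,c}$ also depends on $M_{k,c}$ and $\rho_{k,c}$, not only on $(\kappa_c,g_{k,k},\mathbf R_F)$, but these are still $y$-independent so nothing changes; and your explicit choice of target, namely $\min_k$ of the spatially averaged per-user ergodic common rates, is genuinely needed, because the one-dimensional quantity in \eqref{eq:spatial_common} need not lower-bound the spatial average of the per-snapshot supported rate, $\mathbb E_{y_1,\ldots,y_K}\bigl[\min_k\mathbb E_{\mathrm{fading}}[R_{c,k}\mid x_k,y_k]\bigr]$.
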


\begin{corollary}[Exact spatial average of the mean sensing SNR]\label{cor:spatial_sensing}
For fixed target angle $\theta_m$, the mean sensing SNR in \eqref{eq:thm_sensing} is linear in the satellite--target path-loss factor $\beta_{S,m}$. Therefore,
\begin{equation}
\mathbb E_{y\mid\theta_m}
\!\left[\bar\gamma^s\big(\beta_{S,m}(x_m,y)\big)\right]
=\bar\gamma^s\!\left(\mathbb E[\beta_{S,m}\mid\theta_m]\right),
\label{eq:cor_spatial_sensing}
\end{equation}
where the conditional mean is given by \eqref{eq:Ebeta_cond}. This equality holds because all other factors in Theorem~\ref{thm:sensing}, including the realized target-direction gain, are fixed once $\theta_m$ and the deterministic beam design are fixed.
\end{corollary}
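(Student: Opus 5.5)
The plan is to show that, once $\theta_m$ and the beam design are fixed, the only quantity in \eqref{eq:thm_sensing} that still depends on the target's residual $y$ coordinate is $\beta_{S,m}$. The result then follows from linearity of expectation together with Lemma~\ref{lem:beta_conditional}.

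First, I would write \eqref{eq:thm_sensing} as $\bar\gamma^s=C\,\beta_{S,m}$, where
\[
C=\frac{\tilde P_c\zeta_m G_p}{\sigma_b^2}\,\frac{\kappa_t g_{c,m}+1}{\kappa_t+1}\,\frac{\kappa_bN_r+1}{\kappa_b+1}\left(\frac{\lambda}{4\pi}\right)^2\pi\lambda_b e^{z_b}E_1(z_b),
\]
and then check factor by factor that $C$ does not depend on $y$.

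\begin{itemize}
\item $\tilde P_c=P_ca_c^2$ and $g_{c,m}=N_y|\mathbf a_t^H(\theta_m)\bar{\mathbf w}_c|^2$ are deterministic functions of the scheduled angles and $\theta_m$. This is because $\boldsymbol\nu$ in \eqref{eq:shared_recording}, the projections \eqref{eq:feed_projection}, and the scaling \eqref{eq:stream_scale} are built only from $\{\theta_k\}$, $\theta_m$, and the fixed $\mathbf\Psi$ (Lemma~\ref{lem:eta}).
\item The receive-leg factor depends only on $N_r$ and $\kappa_b$.
\item The nearest-BS factor $\pi\lambda_b e^{z_b}E_1(z_b)$ comes from the planar PPP $\Phi_b$. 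That process is stationary and independent of $\Phi_r$, so the law of $d_b$ does not depend on the target location and this factor is constant in $y$.
\item $\zeta_m$, $G_p$, and $\sigma_b^2$ are fixed parameters.
\end{itemize}

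Second, with $x_m=h_s\tan\theta_m$ fixed, $\beta_{S,m}=(\lambda/4\pi)^2(h_s^2+x_m^2+y^2)^{-1}$ is a bounded, hence integrable, function of $y$. Linearity of expectation then gives
\[
\mathbb E_{y\mid\theta_m}[C\beta_{S,m}]=C\,\mathbb E[\beta_{S,m}\mid\theta_m]=\bar\gamma^s\big(\mathbb E[\beta_{S,m}\mid\theta_m]\big),
\]
which is \eqref{eq:cor_spatial_sensing}. The conditional mean is supplied by Lemma~\ref{lem:beta_conditional} applied to the representative target, as that lemma explicitly allows. This uses the uniform chord law \eqref{eq:y_given_x} and yields \eqref{eq:Ebeta_cond}.

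The main obstacle is justifying rigorously that no factor other than $\beta_{S,m}$ depends on $y$. The delicate points are the angle-only dependence of the shared RHS state and gains, and the independence of the nearest-BS distance from the target position. The first holds because the principal-plane steering vector depends only on $x$. The second I would argue from the stationarity of $\Phi_b$ on the plane together with its independence from $\Phi_r$; here it matters that the BS layer is not restricted to $\mathcal D$, so the footprint boundary introduces no edge effects. A brief remark would also contrast this with Remark~\ref{rem:spatial_naive}: equality rather than a Jensen inequality holds here precisely because the map is affine rather than strictly concave.
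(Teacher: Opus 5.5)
Your proposal is correct and matches the paper's own justification. Both factor \eqref{eq:thm_sensing} as a $y$-independent constant times $\beta_{S,m}$, pass the conditional expectation through by linearity, and take the conditional mean from Lemma~\ref{lem:beta_conditional}. Your factor-by-factor check, in particular using the stationarity and independence of the planar BS process $\Phi_b$ to show that the nearest-BS factor does not depend on target location, makes explicit what the paper asserts in its single closing sentence.
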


\begin{table}[!t]
\caption{Nominal Simulation Parameters}
\label{tab:sim_parameters}
\centering
\footnotesize
\begin{tabular}{@{}ll@{}}
\toprule
Parameter & Nominal value \\
\midrule
Carrier frequency $f_c$ & $20$~GHz \\
Satellite altitude $h_s$ & $550$~km \\
Bandwidth $B$ & $200$~MHz \\
System temperature $T_{\rm sys}$ & $300$~K \\
Receiver implementation loss $L_{\rm ex}$ & $2.5$ dB \\
RHS size $N_x\times N_y$ & $128\times128$ ($N_{\rm RHS}=16384$) \\
Element spacing; $k_g/k_0$ & $\lambda/2$; $1$ \\
RHS modulation limit $A_{\max}$ & $1$ (dimensionless) \\
Active RHS feeds $N_f$ & $25$ \\
Fixed feed direction cosines $\{\xi_f\}$ & Uniformly spaced on $[-0.5,0.5]$ \\
Scheduled user angles & $\{-28^\circ,-11^\circ,8^\circ,24^\circ\}$ \\
Target angle $\theta_m$ & $3^\circ$ \\
Nominal transmit power $P_o$ & $150$~W \\
Common-power fraction $\beta$ & $0.3$ \\
Common-beam steering $\tau$ & $0.3$ \\
Rician factors $\kappa_c,\kappa_t,\kappa_b$ & $10,10,6$~dB \\
FAS ports $P$; aperture $W_{\rm FAS}$ & $8$; $2$ wavelengths \\
Sensing-BS antennas $N_r$ & $32$ \\
Sensing-BS density $\lambda_b$ & $1/100~\mathrm{km}^{-2}$ \\
Minimum sensing range $r_{\min}$ & $100$~m \\
Physical target RCS $\sigma_m$ & $5~\mathrm m^2$ \\
Coherent sensing interval $T_s$ & $1$~s \\
\bottomrule
\end{tabular}
\end{table}

\begin{figure*}[!t]
\centering
\subfloat[Average private rate.\label{fig:rate_validation_private}]{%
\includegraphics[width=0.475\textwidth]{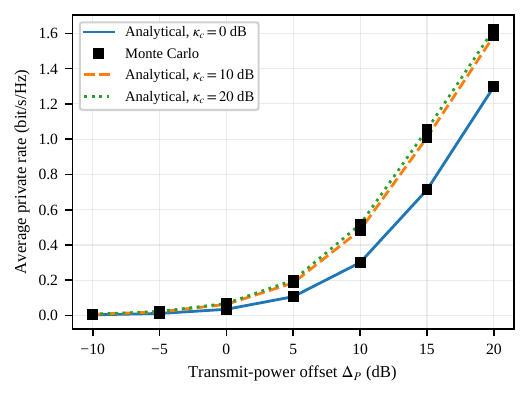}}
\hfill
\subfloat[Supported common rate.\label{fig:rate_validation_common}]{%
\includegraphics[width=0.475\textwidth]{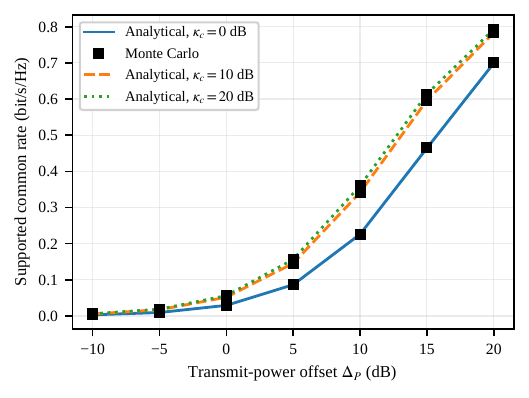}}
\caption{Reference-port analytical validation versus transmit-power offset.}
\label{fig:rate_validation}
\end{figure*}

\subsection{Averaging over the Angular Scheduling Geometry}
\label{sec:angular_avg}

Section~\ref{sec:spatial_avg} conditions on the scheduled angles. We next characterize the candidate-angle process generated by the same finite-window PPP before scheduling.

\begin{lemma}[Induced angular candidate intensity]\label{lem:angular_intensity}
Let the principal-plane coordinate of a candidate user be $x$ and let $\theta=\arctan(x/h_s)$. The restriction $\Phi_u\cap\mathcal D$ projected onto the $x$-axis is an inhomogeneous one-dimensional PPP on $[-r_c,r_c]$ with intensity
\begin{equation}
\lambda_X(x)=2\lambda_u\sqrt{r_c^2-x^2}.
\end{equation}
Under the one-to-one transformation $x=h_s\tan\theta$, its angular intensity is
\begin{align}
\lambda_\Theta(\theta)
&=2\lambda_u h_s\sec^2\theta
\sqrt{r_c^2-h_s^2\tan^2\theta},\nonumber\\
|\theta|&<\theta_{\rm cov},\qquad
\theta_{\rm cov}\triangleq\arctan(r_c/h_s).
\label{eq:angular_intensity}
\end{align}
Moreover, $\int_{-\theta_{\rm cov}}^{\theta_{\rm cov}}\lambda_\Theta(\theta)d\theta=\lambda_u\pi r_c^2$, equal to the mean number of candidate users in the footprint.
\end{lemma}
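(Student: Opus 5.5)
The proof follows from the mapping theorem for Poisson processes applied twice: first to the projection $(x,y)\mapsto x$, then to the change of variables $x=h_s\tan\theta$. I will avoid conditioning on the point count in this lemma, since the statement is about the full Poisson structure of the projected process. Instead, I will use the fact that a measurable image of a PPP is again a PPP whenever its image intensity measure is locally finite and diffuse (no atoms).

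\emph{Step 1 (projection).} The restriction $\Phi_u\cap\mathcal D$ is a PPP on $\mathcal D$ with intensity measure $\lambda_u\,dx\,dy$. Under $\pi(x,y)=x$, the image of a Borel set $B\subseteq[-r_c,r_c]$ has measure
\[
\Lambda_X(B)=\lambda_u\,\bigl|\{(x,y)\in\mathcal D:x\in B\}\bigr|=\int_B\lambda_u\int_{-Y(x)}^{Y(x)}dy\,dx=\int_B 2\lambda_u\sqrt{r_c^2-x^2}\,dx .
\]
This measure is finite and absolutely continuous, so it has no atoms. The mapping theorem then says the projected points form a one-dimensional PPP with intensity $\lambda_X(x)=2\lambda_u\sqrt{r_c^2-x^2}$. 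The required independence over disjoint sets is inherited directly: disjoint sets $B_1,B_2$ have disjoint preimage strips $B_1\times\mathbb R$ and $B_2\times\mathbb R$ in $\mathcal D$.

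\emph{Step 2 (angular change of variables).} The map $\theta=\arctan(x/h_s)$ is a smooth increasing bijection from $[-r_c,r_c]$ onto $[-\theta_{\rm cov},\theta_{\rm cov}]$, where $\theta_{\rm cov}=\arctan(r_c/h_s)$. Applying the mapping theorem again, the image is a PPP with density $\lambda_\Theta(\theta)=\lambda_X(h_s\tan\theta)\,|dx/d\theta|$. Since $dx/d\theta=h_s\sec^2\theta$ and $\sqrt{r_c^2-h_s^2\tan^2\theta}$ is real precisely for $|\theta|<\theta_{\rm cov}$, this gives \eqref{eq:angular_intensity}.

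\emph{Step 3 (total mass).} Substituting back $x=h_s\tan\theta$ turns the angular integral into $\int_{-r_c}^{r_c}2\lambda_u\sqrt{r_c^2-x^2}\,dx$. This is $2\lambda_u$ times the area of a half-disk of radius $r_c$, so it equals $\lambda_u\pi r_c^2$, the mean number of points of $\Phi_u$ in $\mathcal D$.

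No step is genuinely hard. The only point needing care is Step 1: one must check that projecting a two-dimensional PPP still yields a Poisson process, which could fail if the projection created atoms. It does not, because the image measure is absolutely continuous. One should also note that the lemma describes the candidate process before scheduling; the scheduled angles are selected from it and are not themselves Poisson.
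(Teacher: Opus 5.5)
Your proof is correct and takes the same route as the paper: the mapping theorem under the projection $(x,y)\mapsto x$, then the Jacobian $dx=h_s\sec^2\theta\,d\theta$, and finally recovering the total mass by changing variables back to $x$. Your computation of the image measure by integrating over the chord is simply a rigorous version of the paper's strip-area step $2\sqrt{r_c^2-x^2}\,dx+o(dx)$, and you also check explicitly the half-disk integral $\int_{-r_c}^{r_c}2\lambda_u\sqrt{r_c^2-x^2}\,dx=\lambda_u\pi r_c^2$, which the paper leaves implicit.
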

\begin{proof}
For a vertical strip $[x,x+dx]$, the intersection with $\mathcal D$ has area $2\sqrt{r_c^2-x^2}\,dx+o(dx)$, giving $\lambda_X(x)$. The mapping theorem for PPPs and $dx=h_s\sec^2\theta\,d\theta$ then yield \eqref{eq:angular_intensity}; its integral is preserved by the change of variables.
\end{proof}

\begin{corollary}[Candidate-angle density conditional on the footprint count]\label{cor:angular_density}
Conditioned on $N_u=|\Phi_u\cap\mathcal D|=n>0$, the $n$ candidate locations are i.i.d. uniform on $\mathcal D$. Their angles are therefore i.i.d. with density
\begin{equation}
f_\Theta(\theta)
=\frac{\lambda_\Theta(\theta)}{\lambda_u\pi r_c^2},
\qquad |\theta|<\theta_{\rm cov}.
\label{eq:angular_density}
\end{equation}
\end{corollary}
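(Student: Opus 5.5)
The plan is to combine the standard conditional-uniformity property of a finite-window PPP with the one-dimensional change of variables already established in Lemma~\ref{lem:angular_intensity}. First, I will condition on $N_u=|\Phi_u\cap\mathcal D|=n>0$. For a homogeneous PPP restricted to the bounded window $\mathcal D$, the count is Poisson with mean $\lambda_u\pi r_c^2$. Conditional on the count being $n$, the points form a binomial point process, i.e., $n$ i.i.d. locations uniform on $\mathcal D$; this is the property already invoked in Section~\ref{sec:spatial_avg} and Lemma~\ref{lem:beta_spatial}. Since each angle $\theta=\arctan(x/h_s)$ is a measurable function of its own point only, the angles inherit the i.i.d. property.

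Second, I will compute the common marginal law of one angle. The $x$-coordinate of a point uniform on the disk has density $2\sqrt{r_c^2-x^2}/(\pi r_c^2)$ on $[-r_c,r_c]$, obtained from the chord length exactly as in the strip argument of Lemma~\ref{lem:angular_intensity}. This density equals $\lambda_X(x)/(\lambda_u\pi r_c^2)$. The map $x=h_s\tan\theta$ is a smooth increasing bijection from $(-\theta_{\rm cov},\theta_{\rm cov})$ onto $(-r_c,r_c)$ with Jacobian $h_s\sec^2\theta$, so the density of $\theta$ is $2h_s\sec^2\theta\sqrt{r_c^2-h_s^2\tan^2\theta}/(\pi r_c^2)$. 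Comparing with \eqref{eq:angular_intensity} shows that this is precisely $\lambda_\Theta(\theta)/(\lambda_u\pi r_c^2)$, which is \eqref{eq:angular_density}. As a consistency check, normalization follows from the integral identity at the end of Lemma~\ref{lem:angular_intensity}.

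Alternatively, one could derive the result directly from the intensity: for any PPP with finite total intensity, conditioning on $n$ points yields i.i.d. points with density proportional to the intensity. Applying this to the projected angular PPP of Lemma~\ref{lem:angular_intensity} gives \eqref{eq:angular_density} immediately.

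I do not expect a real obstacle here. The only point needing care is the boundary: the endpoints $|\theta|=\theta_{\rm cov}$ have measure zero, and $\tan$ is bijective on the open interval since $\theta_{\rm cov}<\pi/2$. Beyond that, the step to state carefully is that conditioning on the count does not disturb the i.i.d. structure when passing from locations to angles.
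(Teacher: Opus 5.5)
Your proposal is correct and follows essentially the paper's own argument, which the corollary statement itself sketches: conditional on $N_u=n$, the points are i.i.d.\ uniform on $\mathcal D$, so the angles are i.i.d.\ with density equal to the normalized intensity $\lambda_\Theta(\theta)/(\lambda_u\pi r_c^2)$ from Lemma~\ref{lem:angular_intensity}. Your explicit computation, via the $x$-marginal $2\sqrt{r_c^2-x^2}/(\pi r_c^2)$ and the Jacobian $h_s\sec^2\theta$, just spells out that normalization step, and it checks out.
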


\begin{figure*}[!t]
\centering
\subfloat[Average private rate.\label{fig:fas_validation_private}]{%
\includegraphics[width=0.475\textwidth]{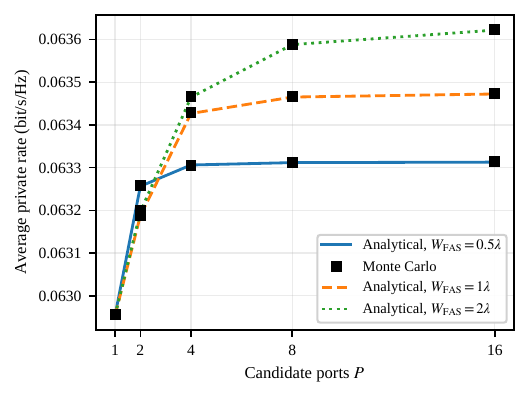}}
\hfill
\subfloat[Supported common rate.\label{fig:fas_validation_common}]{%
\includegraphics[width=0.475\textwidth]{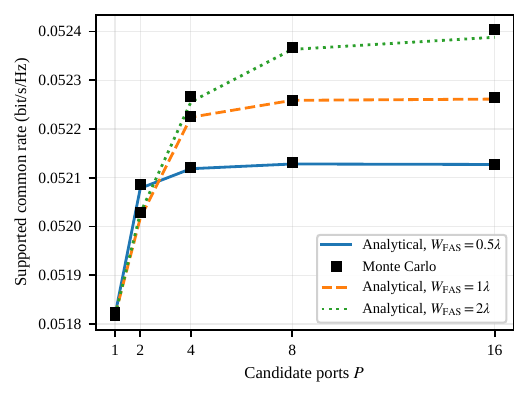}}
\caption{Validation of the FAS-selected analytical bounds versus candidate-port count.}
\label{fig:fas_validation}
\end{figure*}

\paragraph{Full scheduled-angle average}
The farthest-point rule in \eqref{eq:scheduling} is a nonlinear functional of the complete candidate-angle set, so the selected angles are neither independent nor distributed according to \eqref{eq:angular_density}. We therefore generate the finite-window PPP (or, conditional on $N_u=n$, draw $n$ i.i.d. angles from \eqref{eq:angular_density}), apply the scheduler explicitly, and average the resulting deterministic beam gains and performance metrics numerically. No independent-angle approximation is used for the scheduled set.

\begin{figure}[!t]
\centering
\includegraphics[width=\columnwidth]{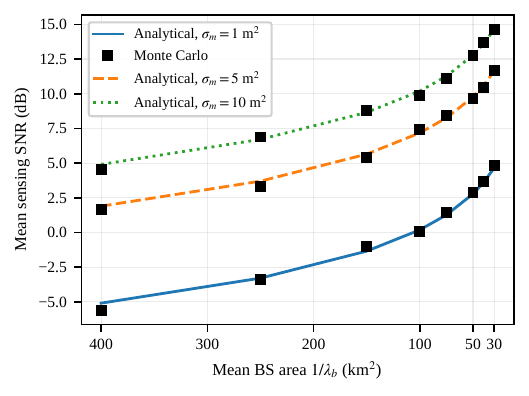}
\caption{Bistatic mean sensing SNR validation versus sensing-BS density.}
\label{fig:sensing_validation}
\end{figure}

\begin{figure}[!t]
\centering
\includegraphics[width=\columnwidth]{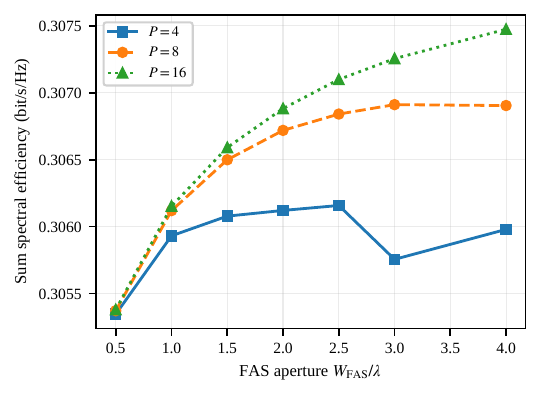}
\caption{Joint impact of FAS aperture and candidate-port count on communication performance.}
\label{fig:fas_design}
\end{figure}

\section{Numerical Validation and Performance Results}
\label{sec:numerical_results}
We first validate the analytical expressions and then compare the architecture with relevant references. The nominal parameters are summarized in Table~\ref{tab:sim_parameters}. The numerical RHS uses a fixed $N_f=25$ feed basis $[\mathbf\Psi]_{n,f}=e^{j\pi(n-1)\xi_f}$ with $\xi_f$ uniformly spaced on $[-0.5,0.5]$; the center feed ($\xi_{f_0}=0$) provides the fixed holographic reference wave. This hardware basis is held unchanged for every geometry and parameter sweep. Monte Carlo communication rates are computed directly from the instantaneous SINRs in \eqref{eq:gamma_common}--\eqref{eq:gamma_rates} and rates in \eqref{eq:instant_rates}, without using the analytical log-moment or mean-interference substitutions. Sensing Monte Carlo trials draw both Rician hops and the nearest-BS distance directly from the model in Section~II-F. Thus, the simulation curves are independent numerical checks of the analytical bounds/mean expressions. All numerical figures use color together with distinct line and marker styles to preserve grayscale readability. In the analytical-validation plots, continuous curves denote analytical expressions and filled-square markers denote independent Monte Carlo results; markers in subsequent performance figures are used only for visual readability unless explicitly stated otherwise.

\subsection{Analytical Validation}
Fig.~\ref{fig:rate_validation} validates the reference-port bounds of Theorems~\ref{thm:private} and~\ref{thm:common} over a $30$-dB transmit-power range for $\kappa_c\in\{0,10,20\}$~dB. At the nominal power offset, the simulated/analytical average private rates are $0.03519/0.03517$, $0.06295/0.06295$, and $0.06835/0.06835$~bit/s/Hz, while the corresponding supported common rates are $0.02930/0.02925$, $0.05181/0.05181$, and $0.05614/0.05614$~bit/s/Hz. At the largest tested offset of $20$~dB, the simulated/analytical private rates are $1.29833/1.29786$, $1.58640/1.58631$, and $1.62335/1.62334$~bit/s/Hz. The near-coincidence verifies the shared-RHS rate bounds over both power and LoS conditions.

Fig.~\ref{fig:fas_validation} validates Lemma~\ref{lem:fas}, Theorem~\ref{thm:fas_private}, Lemma~\ref{lem:fas_crosslog}, and Theorem~\ref{thm:fas_common} for $W_{\rm FAS}\in\{0.5\lambda,\lambda,2\lambda\}$. The numerically evaluated analytical bounds use the full-$\mathbf R_F$ selected-port moments computed by low-discrepancy Gaussian integration and are compared with $3\times10^4$ Monte Carlo channel realizations per operating point. At $P=8$ and $W_{\rm FAS}=2\lambda$, the simulated/analytical private rates are $0.063588/0.063588$~bit/s/Hz, while the supported common rates are $0.052368/0.052363$~bit/s/Hz. The agreement confirms the same-port interference and common-stream coupling under the full Bessel correlation model.

\begin{figure}[!t]
\centering
\includegraphics[width=\columnwidth]{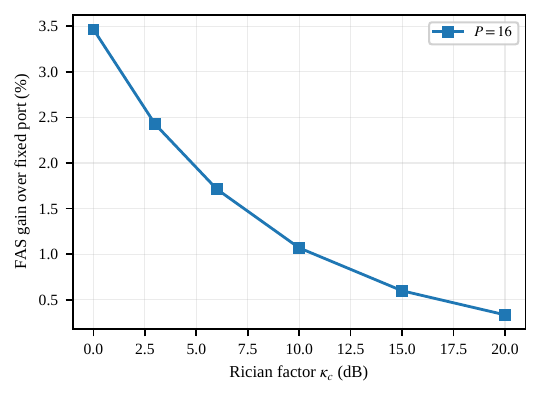}
\caption{Relative FAS gain over fixed-port reception versus the communication-link Rician factor for $P=16$.}
\label{fig:fas_kappa}
\end{figure}

\begin{figure}[!t]
\centering
\includegraphics[width=\columnwidth]{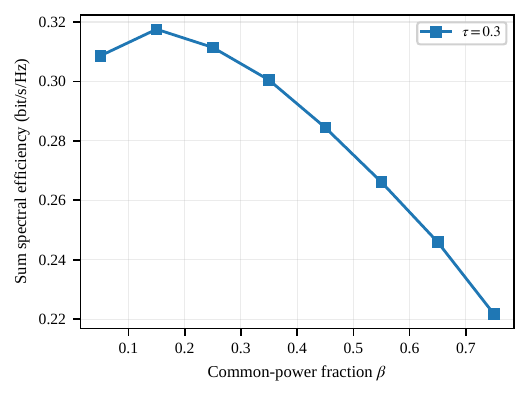}
\caption{Communication sum spectral efficiency versus the RSMA
    common-power fraction $\beta$ at the nominal $\tau=0.3$.}
\label{fig:rsma_beta_comm}
\end{figure}
Fig.~\ref{fig:sensing_validation} validates Theorem~\ref{thm:sensing} for $\sigma_m\in\{1,5,10\}$~m$^2$. At $\lambda_b=1/100~\mathrm{km}^{-2}$, the analytical/Monte Carlo mean sensing SNRs are approximately $0.19/0.07$, $7.18/7.43$, and $10.19/9.89$~dB, respectively. For $\sigma_m=5$~m$^2$, increasing the density from $1/400$ to $1/30~\mathrm{km}^{-2}$ raises the analytical mean SNR from approximately $1.90$ to $11.65$~dB because the nearest target--receiver distance decreases.

\subsection{FAS Spatial Selection and Propagation Regime}
Fig.~\ref{fig:fas_design} shows the joint impact of FAS aperture and candidate-port count. At $W_{\rm FAS}=2\lambda$, the sum spectral efficiencies are approximately $0.30612$, $0.30672$, and $0.30688$~bit/s/Hz for $P=4$, $8$, and $16$, respectively. For $P=16$, enlarging the aperture from $0.5\lambda$ to $4\lambda$ raises the sum spectral efficiency from $0.30538$ to $0.30747$~bit/s/Hz. The small local nonmonotonicity for smaller $P$ is consistent with the oscillatory Bessel correlation structure.

\begin{figure}[!t]
\centering
\includegraphics[width=\columnwidth]{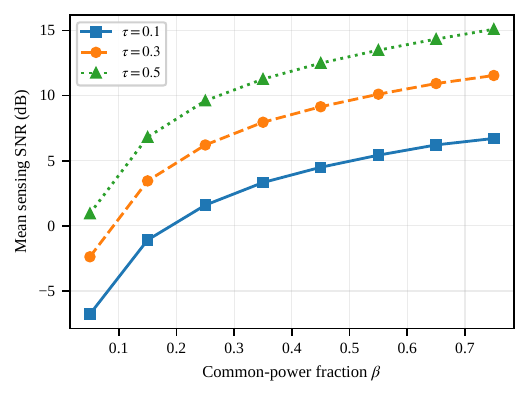}
\caption{Average bistatic sensing SNR versus the RSMA common-power
    fraction $\beta$ for different common-beam steering factors $\tau$.}
\label{fig:beta_sensing}
\end{figure}

\begin{figure}[!t]
\centering
\includegraphics[width=\columnwidth]{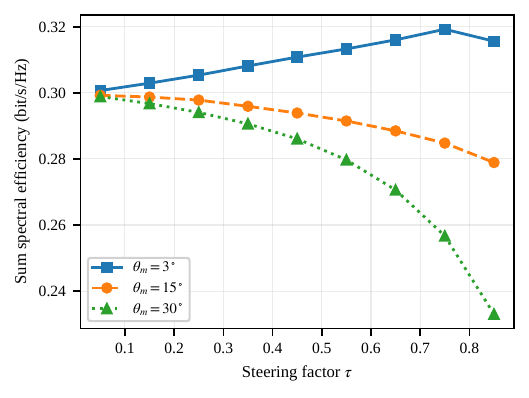}
\caption{Communication sum spectral efficiency versus the common-beam target-steering factor $\tau$ for different target angles.}
\label{fig:tau_geometry_comm}
\end{figure}

Fig.~\ref{fig:fas_kappa} isolates the propagation-regime dependence: the $P=16$ gain over fixed-port reception decreases from about $3.47\%$ at $\kappa_c=0$~dB to $1.07\%$ at $10$~dB and $0.34\%$ at $20$~dB. Thus FAS is most useful when a meaningful locally scattered component remains available for port selection.

\begin{figure}[!t]
\centering
\includegraphics[width=\columnwidth]{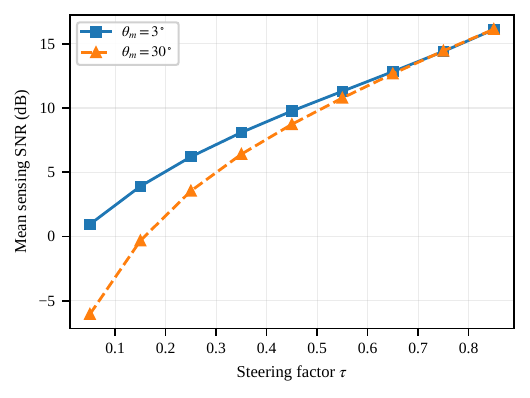}
\caption{Average bistatic sensing SNR versus the common-beam target-steering factor $\tau$ for the two extreme target angles.}
\label{fig:tau_geometry_sensing}
\end{figure}

\begin{figure*}[!t]
\centering
\subfloat[Communication sum spectral efficiency.\label{fig:rhs_benchmark_comm}]{%
\includegraphics[width=0.315\textwidth]{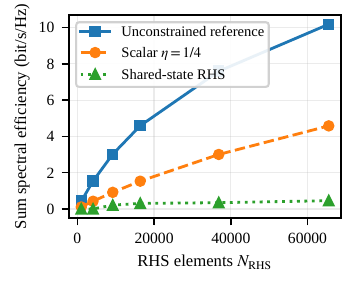}}
\hfill
\subfloat[Average sensing SNR.\label{fig:rhs_benchmark_sensing}]{%
\includegraphics[width=0.315\textwidth]{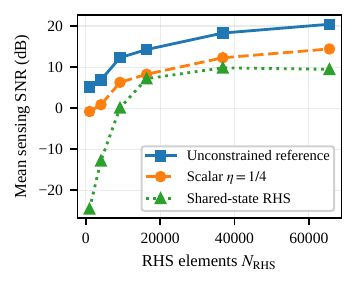}}
\hfill
\subfloat[Realized target-direction gain.\label{fig:rhs_benchmark_gain}]{%
\includegraphics[width=0.315\textwidth]{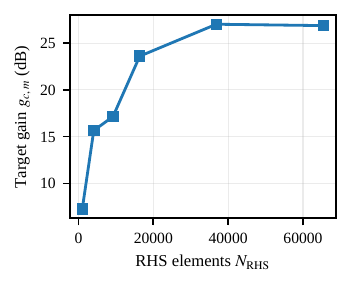}}
\caption{RHS architecture benchmarks versus aperture size. The unconstrained complex-weight curve is a reference beam set, not a direction-wise upper bound.}
\label{fig:rhs_benchmark}
\end{figure*}

\subsection{RSMA Communication--Sensing Coupling}
Fig.~\ref{fig:rsma_beta_comm} shows the nominal $\tau=0.3$ communication curve because the corresponding curves for other $\tau$ values are close. The sum spectral efficiency first increases from $0.3086$ at $\beta=0.05$ to $0.3176$ at $\beta=0.15$, then decreases to $0.2217$ at $\beta=0.75$ as the private-stream power loss dominates. Hence the communication-optimal common-power fraction is interior for the considered geometry.

Fig.~\ref{fig:beta_sensing} shows a much stronger sensing dependence. For $\tau=0.3$, increasing $\beta$ from $0.05$ to $0.75$ raises the mean sensing SNR from approximately $-2.38$ to $11.54$~dB. At $\beta=0.35$, increasing $\tau$ from $0.1$ to $0.3$ and $0.5$ raises the mean sensing SNR from $3.33$ to $7.95$ and $11.26$~dB, respectively. Thus $\beta$ controls the power-domain tradeoff while $\tau$ provides spatial sensing control.
\begin{figure*}[!t]
\centering
\subfloat[Achieved minimum user separation.\label{fig:scheduler_sep}]{%
\includegraphics[width=0.475\textwidth]{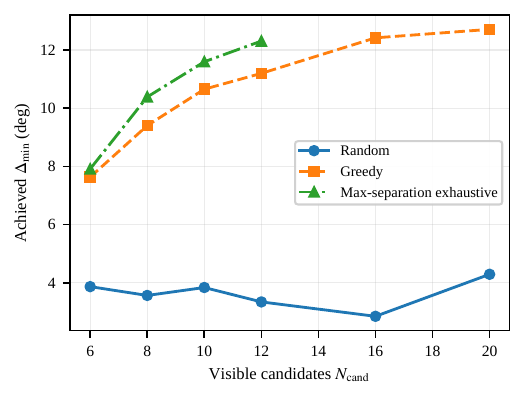}}
\hfill
\subfloat[Minimum pre-mapping ZF gain and Lemma~1 bound.\label{fig:scheduler_gain}]{%
\includegraphics[width=0.475\textwidth]{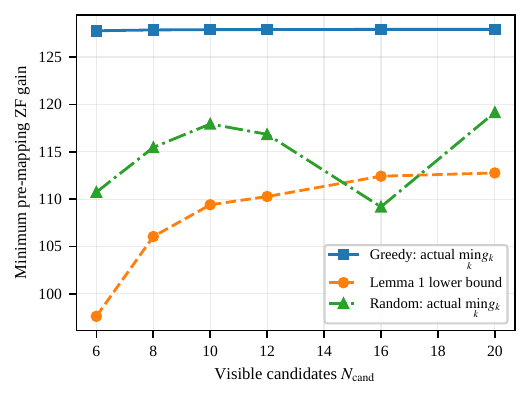}}
\caption{Angular scheduling benchmark.}
\label{fig:scheduler}
\end{figure*}

Fig.~\ref{fig:tau_geometry_comm} examines the communication impact of the common-beam target-steering factor $\tau$ for different target directions. The resulting sum-spectral-efficiency variations remain modest compared with the sensing response, although the trend depends on geometry. For example, at $\theta_m=3^\circ$, the sum spectral efficiency increases from approximately $0.3007$ at $\tau=0.05$ to $0.3156$ at $\tau=0.85$.

Fig.~\ref{fig:tau_geometry_sensing} shows the corresponding sensing behavior for the two extreme target angles. At $\tau=0.05$, the mean sensing SNRs are approximately $0.95$ and $-6.05$~dB for $\theta_m=3^\circ$ and $30^\circ$, respectively. As $\tau$ increases, explicit target steering progressively dominates the common-beam orientation, and both cases approach approximately $16.1$~dB at $\tau=0.85$. This shows that increasing $\tau$ can largely overcome the initial geometry mismatch in the sensing direction.

\subsection{RHS Hardware Benchmarks}
Fig.~\ref{fig:rhs_benchmark} compares the unconstrained complex-weight reference, the scalar $\eta=1/4$ proxy, and the shared-state RHS with the same fixed $25$-feed hardware basis at every aperture size. At $N_{\rm RHS}=16384$, the sum spectral efficiencies are approximately $4.59$, $1.53$, and $0.307$~bit/s/Hz, respectively. The gap quantifies the cost of enforcing one shared amplitude state and a fixed feed network rather than independently synthesized complex aperture weights.

At the same aperture, Fig.~\ref{fig:rhs_benchmark}(b) gives mean sensing SNRs of approximately $14.28$, $8.26$, and $7.18$~dB. At $N_{\rm RHS}=1024$, the shared-state realization falls to $-24.54$~dB because the fixed feed basis and shared recording pattern couple poorly to the small aperture, illustrating why a scalar efficiency factor cannot represent architecture- and direction-dependent realizability effects.

The linear target gain $g_{c,m}$ underlying Fig.~\ref{fig:rhs_benchmark}(c) increases from about $5.34$ at $N_{\rm RHS}=1024$ to $503.0$ at $N_{\rm RHS}=36864$, before decreasing slightly to $486.3$ at $65536$. Hence off-design target gain need not vary monotonically with aperture size even when the hardware feed basis is fixed.

\subsection{Scheduling Geometry}

Fig.~\ref{fig:scheduler} evaluates the angular scheduling rule and Lemma~\ref{lem:sched_gain}. To remain within the lemma's stated $\theta_{\rm sch}\leq30^\circ$ domain, this experiment considers a $300$-km footprint at $h_s=550$~km and averages over random candidate sets. As shown in Fig.~\ref{fig:scheduler}(a), the greedy farthest-point rule consistently produces substantially larger minimum angular separations than random $K$-user selection and remains close to the exhaustive max-separation benchmark over the entire tested range $N_{\rm cand}=6$--$20$. For example, at $N_{\rm cand}=10$, the average achieved separations are approximately $10.49^\circ$,
$3.48^\circ$, and $11.40^\circ$ for greedy, random, and exhaustive selection, respectively. As the candidate population increases, the exhaustive benchmark continues to provide the largest separation, reaching approximately $14.60^\circ$ at $N_{\rm cand}=20$, compared with $12.65^\circ$ for the greedy rule.
\begin{figure*}[!t]
\centering
\subfloat[Communication sum spectral efficiency.\label{fig:bistatic_comm}]{%
\includegraphics[width=0.475\textwidth]{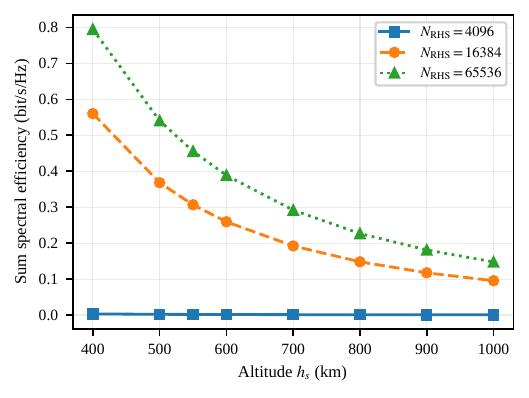}}
\hfill
\subfloat[Bistatic versus favorable monostatic sensing.\label{fig:bistatic_sensing}]{%
\includegraphics[width=0.475\textwidth]{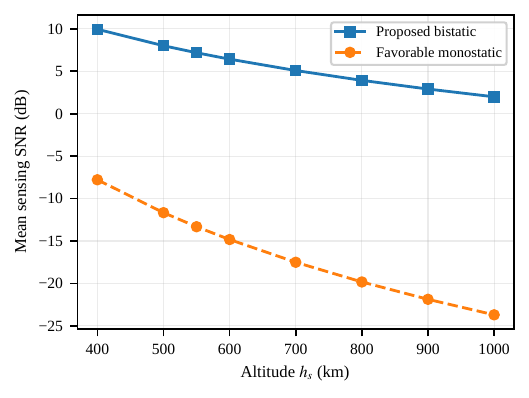}}
\caption{LEO-altitude and sensing-architecture benchmark.}
\label{fig:bistatic_benchmark}
\end{figure*}

Fig.~\ref{fig:scheduler}(b) compares the actual minimum principal-plane pre-mapping ZF gain with the analytical lower bound of Lemma~\ref{lem:sched_gain}. Because the scheduler lemma is derived for the $N_h=N_x$ steering Gram matrix, the plotted $g_k$ is the principal-plane quantity; the corresponding separable full-aperture LoS directional gain is $N_y g_k$ by \eqref{eq:upa_gain}. Although Fig.~\ref{fig:scheduler}(a) reports angular separation in degrees, all angles are converted to radians when evaluating the lemma. The greedy scheduler maintains $\min_k g_k\approx127.69$--$127.92$ over the tested candidate populations, while the lemma provides a conservative but strictly positive guarantee that increases from approximately $96.37$ to $112.63$. Random scheduling yields a substantially lower and more variable minimum ZF gain. These results confirm the role of the proposed scheduler as a geometry-conditioning mechanism that improves user angular separation and preserves favorable pre-mapping ZF gains; it is not claimed to solve the post-mapping sum-rate maximization problem.

\subsection{Bistatic Architecture and LEO Geometry}
Fig.~\ref{fig:bistatic_benchmark}(a) shows the expected communication degradation with altitude. For $N_{\rm RHS}=16384$, increasing $h_s$ from $400$ to $1000$~km reduces the sum spectral efficiency from approximately $0.560$ to $0.095$~bit/s/Hz, while larger apertures partially offset the path-loss penalty.

Fig.~\ref{fig:bistatic_benchmark}(b) compares the proposed nearest-ground-receiver architecture with a deliberately favorable monostatic reference in which the echo returns to the satellite, direct-path/self-interference is neglected, and the same $N_{\rm RHS}$-element aperture receives ideal coherent gain. Its mean SNR is
\begin{equation}
\bar\gamma_m^{\rm mono}
=\frac{\widetilde P_c\zeta_m G_p}{\sigma_b^2}\,\beta_{S,m}^2\,
\frac{\kappa_t g_{c,m}+1}{\kappa_t+1}\,N_{\rm RHS}.
\label{eq:mono_benchmark}
\end{equation}
For $N_{\rm RHS}=16384$ and $h_s=550$~km, the bistatic and monostatic mean SNRs are approximately $7.18$ and $-13.32$~dB, respectively, a $20.5$~dB advantage. The gap increases from about $17.7$~dB at $400$~km to $25.7$~dB at $1000$~km because the monostatic echo experiences the LEO-scale propagation distance twice whereas the bistatic return hop remains terrestrial.

\section{Conclusion}
\label{sec:conclusion}
This paper developed an ergodic performance framework for RSMA-enabled bistatic LEO-ISAC with a shared-state multi-feed RHS and FAS users. Exact realized self, leakage, and target gains were retained in conservative private/common rate bounds, including same-port FAS selection coupling under a full Bessel port-correlation model, while a closed-form mean bistatic sensing SNR was derived for nearest-ground-receiver association and extended to footprint/scheduling geometry. Monte Carlo validation confirmed the communication bounds and mean sensing expression. The results show that FAS gains are largest in scattering-rich regimes, scalar RHS-efficiency models can miss strong architecture- and direction-dependent effects, and realized target gain need not scale monotonically with aperture size. For $N_{\rm RHS}=16384$, nearest-BS bistatic reception retains a $17.7$--$25.7$~dB mean SNR advantage over a favorable monostatic reference across $400$--$1000$~km.

\bibliographystyle{IEEEtran}
\bibliography{Reference}
\end{document}